\documentclass[submission,copyright,creativecommons]{eptcs}
\usepackage{breakurl}             
\usepackage{underscore}           
\usepackage{tikz}
\usetikzlibrary{automata,positioning,shapes,shadows,arrows}
\usepackage{graphicx}
\usepackage{amssymb}
\usepackage{scalerel}
\usepackage{amsthm}
\usepackage{amsmath}
\usepackage{siunitx} 
\usepackage{amsfonts}
\usepackage{listings}
\usepackage{xspace}
\usepackage{graphics,color,fancybox,alltt,amssymb,stmaryrd,epsf,epsfig}
\usepackage{amsmath}
\usepackage{arydshln}
\usepackage{fancyvrb}
\usepackage{mathrsfs}
\usepackage{arydshln}
\usepackage{amsthm}
\usepackage[factor=700]{microtype}
\usepackage{microtype}
\usepackage{comment}
\usepackage{qtree,eepic}
\usepackage{synttree}
\usepackage{xspace}
\usepackage{wrapfig}
\usepackage{etex}
\usepackage{tikz}
\usetikzlibrary{shapes.misc}
\usetikzlibrary{positioning}
\usetikzlibrary{calc}
\usetikzlibrary{decorations.pathmorphing}
\usepackage{ifthen}

\usepackage[T1]{fontenc}

\usepackage{enumitem}

\setlist[itemize]{leftmargin=5.5mm}

\message{<Paul Taylor's Proof Trees, 17 August 1990>}

\newcounter{mscount}





\newdimen\proofrulebreadth \proofrulebreadth=.05em
\newdimen\proofdotseparation \proofdotseparation=1.25ex
\newdimen\proofrulebaseline \proofrulebaseline=2ex
\newcount\proofdotnumber \proofdotnumber=3
\let\then\relax
\def\hfi{\hskip0pt plus.0001fil}
\mathchardef\squigto="3A3B
%
\newif\ifinsideprooftree\insideprooftreefalse
\newif\ifonleftofproofrule\onleftofproofrulefalse
\newif\ifproofdots\proofdotsfalse
\newif\ifdoubleproof\doubleprooffalse
\let\wereinproofbit\relax
%
\newdimen\shortenproofleft
\newdimen\shortenproofright
\newdimen\proofbelowshift
\newbox\proofabove
\newbox\proofbelow
\newbox\proofrulename
%
\def\shiftproofbelow{\let\next\relax\afterassignment\setshiftproofbelow\dimen0 }
\def\shiftproofbelowneg{\def\next{\multiply\dimen0 by-1 }%
\afterassignment\setshiftproofbelow\dimen0 }
\def\setshiftproofbelow{\next\proofbelowshift=\dimen0 }
\def\setproofrulebreadth{\proofrulebreadth}

\def\prooftree{
%
\ifnum	\lastpenalty=1
\then	\unpenalty
\else	\onleftofproofrulefalse
\fi
%
\ifonleftofproofrule
\else	\ifinsideprooftree
	\then	\hskip.5em plus1fil
	\fi
\fi
%
\bgroup
\setbox\proofbelow=\hbox{}\setbox\proofrulename=\hbox{}%
\let\justifies\proofover\let\leadsto\proofoverdots\let\Justifies\proofoverdbl
\let\using\proofusing\let\[\prooftree
\ifinsideprooftree\let\]\endprooftree\fi
\proofdotsfalse\doubleprooffalse
\let\thickness\setproofrulebreadth
\let\shiftright\shiftproofbelow \let\shift\shiftproofbelow
\let\shiftleft\shiftproofbelowneg
\let\ifwasinsideprooftree\ifinsideprooftree
\insideprooftreetrue
%
\setbox\proofabove=\hbox\bgroup$\displaystyle 
\let\wereinproofbit\prooftree
%
\shortenproofleft=0pt \shortenproofright=0pt \proofbelowshift=0pt
%
\onleftofproofruletrue\penalty1
}

\def\eproofbit{
%
\ifx	\wereinproofbit\prooftree
\then	\ifcase	\lastpenalty
	\then	\shortenproofright=0pt	
	\or	\unpenalty\hfil		
	\or	\unpenalty\unskip	
	\else	\shortenproofright=0pt	
	\fi
\fi
%
\global\dimen0=\shortenproofleft
\global\dimen1=\shortenproofright
\global\dimen2=\proofrulebreadth
\global\dimen3=\proofbelowshift
\global\dimen4=\proofdotseparation
\setcounter{mscount}{\proofdotnumber}
%
$\egroup  
%
\shortenproofleft=\dimen0
\shortenproofright=\dimen1
\proofrulebreadth=\dimen2
\proofbelowshift=\dimen3
\proofdotseparation=\dimen4
\proofdotnumber=\value{mscount}
}

\def\proofover{
\eproofbit 
\setbox\proofbelow=\hbox\bgroup 
\let\wereinproofbit\proofover
$\displaystyle
}%
%
\def\proofoverdbl{
\eproofbit 
\doubleprooftrue
\setbox\proofbelow=\hbox\bgroup 
\let\wereinproofbit\proofoverdbl
$\displaystyle
}%
%
\def\proofoverdots{
\eproofbit 
\proofdotstrue
\setbox\proofbelow=\hbox\bgroup 
\let\wereinproofbit\proofoverdots
$\displaystyle
}%
%
\def\proofusing{
\eproofbit 
\setbox\proofrulename=\hbox\bgroup 
\let\wereinproofbit\proofusing
\kern0.3em$
}

\def\endprooftree{
\eproofbit 
  \dimen5 =0pt
%
\dimen0=\wd\proofabove \advance\dimen0-\shortenproofleft
\advance\dimen0-\shortenproofright
%
\dimen1=.5\dimen0 \advance\dimen1-.5\wd\proofbelow
\dimen4=\dimen1
\advance\dimen1\proofbelowshift \advance\dimen4-\proofbelowshift
%
\ifdim	\dimen1<0pt
\then	\advance\shortenproofleft\dimen1
	\advance\dimen0-\dimen1
	\dimen1=0pt
	\ifdim  \shortenproofleft<0pt
        \then   \setbox\proofabove=\hbox{%
			\kern-\shortenproofleft\unhbox\proofabove}%
                \shortenproofleft=0pt
        \fi
\fi
%
\ifdim	\dimen4<0pt
\then	\advance\shortenproofright\dimen4
	\advance\dimen0-\dimen4
	\dimen4=0pt
\fi
%
\ifdim	\shortenproofright<\wd\proofrulename
\then	\shortenproofright=\wd\proofrulename
\fi
%
\dimen2=\shortenproofleft \advance\dimen2 by\dimen1
\dimen3=\shortenproofright\advance\dimen3 by\dimen4
%
\ifproofdots
\then
	\dimen6=\shortenproofleft \advance\dimen6 .5\dimen0
	\setbox1=\vbox to\proofdotseparation{\vss\hbox{$\cdot$}\vss}%
	\setbox0=\hbox{%
		\advance\dimen6-.5\wd1
		\kern\dimen6
		$\vcenter to\proofdotnumber\proofdotseparation
			{\leaders\box1\vfill}$%
		\unhbox\proofrulename}%
\else	\dimen6=\fontdimen22\the\textfont2 
	\dimen7=\dimen6
	\advance\dimen6by.5\proofrulebreadth
	\advance\dimen7by-.5\proofrulebreadth
	\setbox0=\hbox{%
		\kern\shortenproofleft
		\ifdoubleproof
		\then	\hbox to\dimen0{%
			$\mathsurround0pt\mathord=\mkern-6mu%
			\cleaders\hbox{$\mkern-2mu=\mkern-2mu$}\hfill
			\mkern-6mu\mathord=$}%
		\else	\vrule height\dimen6 depth-\dimen7 width\dimen0
		\fi
		\unhbox\proofrulename}%
	\ht0=\dimen6 \dp0=-\dimen7
\fi
%
\let\doll\relax
\ifwasinsideprooftree
\then	\let\VBOX\vbox
\else	\ifmmode\else$\let\doll=$\fi
	\let\VBOX\vcenter
\fi
\VBOX	{\baselineskip\proofrulebaseline \lineskip.2ex
	\expandafter\lineskiplimit\ifproofdots0ex\else-0.6ex\fi
	\hbox	spread\dimen5	{\hfi\unhbox\proofabove\hfi}%
	\hbox{\box0}%
	\hbox	{\kern\dimen2 \box\proofbelow}}\doll%
%
\global\dimen2=\dimen2
\global\dimen3=\dimen3
\egroup 
\ifonleftofproofrule
\then	\shortenproofleft=\dimen2
\fi
\shortenproofright=\dimen3
%
\onleftofproofrulefalse
\ifinsideprooftree
\then	\hskip.5em plus 1fil \penalty2
\fi
}


\newcommand{\ignore}[1]{{}}

\newcommand{\K}{\ensuremath{\mathbb{K}}\xspace}

\makeatletter
\newcommand{\shorteq}{%
  \settowidth{\@tempdima}{-}
  \resizebox{\@tempdima}{\height}{=}%
}
\makeatother

\newcommand{\Id}{\texttt{Id}\xspace}

\newcommand{\Stmt}{\texttt{Stmt}\xspace}
\newcommand{\Block}{\texttt{Block}\xspace}

\newcommand{\AExp}{\texttt{AExp}\xspace}
\newcommand{\BExp}{\texttt{BExp}\xspace}
\newcommand{\Map}{\texttt{Map}\xspace}
\newcommand{\Pgm}{\texttt{Pgm}\xspace}

\newcommand{\Bool}{\texttt{bool}\xspace}

















\newcommand{\intType}{\texttt{int}}
\newcommand{\natType}{\texttt{nat}}
\newcommand{\realType}{\texttt{real}}
















\newcommand\blfootnote[1]{%
  \begingroup
  \renewcommand\thefootnote{}\footnote{#1}%
  \addtocounter{footnote}{-1}%
  \endgroup
}

\theoremstyle{definition}
\newtheorem{definition}{Definition}[section]
\newtheorem{theorem}{Theorem}[section]



\title{A Method to Translate Order-Sorted Algebras to Many-Sorted Algebras}

\author{Liyi Li and Elsa Gunter
\institute{Department of Computer Science,\\ University of Illinois
  at Urbana-Champaign
}
\email{\{liyili2,egunter\}@illinois.edu}
}

\begin{document}

\maketitle

\blfootnote{\ensuremath{Acknowledgments.} This material is based upon work supported in part by NSF Grant
0917218.  Any opinions, findings, and conclusions or recommendations
expressed in this material are those of the authors and do not
necessarily reflect the views of the NSF.}

\begin{abstract}
Order-sorted algebras and many sorted algebras exist in a long history with many different implementations and applications. A lot of language specifications have been defined in order-sorted algebra frameworks such as the language specifications in \K (an order-sorted algebra framework). The biggest problem in a lot of the order-sorted algebra frameworks is that even if they might allow developers to write programs and language specifications easily, but they do not have a large set of tools to provide reasoning infrastructures to reason about the specifications built on the frameworks, which are very common in some many-sorted algebra framework such as Isabelle/HOL \cite{isabelle}, Coq \cite{Corbineau2008} and FDR \cite{fdr}. This fact brings us the necessity to marry the worlds of order-sorted algebras and many sorted algebras. In this paper, we propose an algorithm to translate a \textit{strictly sensible} order-sorted algebra to a many-sorted one in a restricted domain by requiring the order-sorted algebra to be \textit{strictly sensible}. The key idea of the translation is to add an equivalence relation called \textit{core equality} to the translated many-sorted algebras. By defining this relation, we reduce the complexity of translating a \textit{strictly sensible} order-sorted algebra to a many-sorted one, make the translated many-sorted algebra equations only increasing by a very small amount of new equations, and keep the number of rewrite rules in the algebra in the same amount. We then prove the order-sorted algebra and its translated many-sorted algebra are bisimilar. To the best of our knowledge, our translation and bisimilar proof is the first attempt in translating and relating an order-sorted algebra with a many-sorted one in a way that keeps the size of the translated many-sorted algebra relatively small.

\end{abstract}
\section{Motivation}
\label{motivation}

Currently, order-sorted algebras are used widely in defining specifications and programs. Maude \cite{clavel00principles} and \K \cite{rosu-serbanuta-2010-jlap} are successful programming languages for defining order-sorted algebras. The specifications of a lot of popular programming languages, such as Java
\cite{bogdanas-rosu-2015-popl}, Javascript \cite{park-stefanescu-rosu-2015-pldi},
 PHP \cite{Filaretti2014}, C \cite{ellison-rosu-2012-popl,hathhorn-ellison-rosu-2015-pldi} and LLVM \cite{llvmsemantics} semantics, have been defined in \K. Experience shows that order-sorted algebras allow users to define specifications easily. In the paper \cite{park-stefanescu-rosu-2015-pldi}, Park \textsf{et al.} show how they can define the full semantics of Javascript by using \K in only three months.

On the other hand, many-sorted algebras also have wide usage. Many people define pieces of popular programming languages such as C, Java, LLVM and Javascript in forms of many-sorted algebras. For example, people define specifications based on many-sorted algebras in some interactive theorem provers, such as Isabelle/HOL \cite{isabelle} and Coq \cite{Corbineau2008}, where people commonly use their many-sorted type theories to prove properties about language specifications. The advantage of using many-sorted algebra based frameworks is that they usually associate with a large amount of tools and applications for users to prove properties about the programs or language specifications they define, such as the tools set of Isabelle/HOL \cite{isabelle}, Coq \cite{Corbineau2008} and FDR \cite{fdr}. 

In order to connect these two worlds, especially to connect the existing programming language semantic specifications defined in the order-sorted algebra \K with the traditional theorem provers such as Isabelle/HOL and Coq, the key is to discover a way to translate an order-sorted algebra into a many-sorted algebra. The reason we want to do this is to use the theorem proving engines and their existing toolsets to develop theories about specifications defined in the order-sorted world. Please note that the syntax of the specification that we are interested in translating from a order-sorted form to a many-sorted form is an abstract syntax, not a concrete syntax of a language. Even though users are allowed to define mixfix syntax in order-sorted programming languages such as \K or Maude, they are still representing the abstract syntax and not the concrete syntax of a specification because the mixfix syntax forms are just syntactic sugars in \K and Maude to write abstract syntax for a specification. For example, both \K and Maude do not allow users to create overloaded constants. 

To the best of our knowledge, the most recent and relevant work on defining a translation mechanism is that of Meseguer and Skeirik \cite{Meseguer2017}, who created an algorithm to translate an initial free order-sorted algebra (algebras without equations and rules) to a many-sorted one. They only propose a naive algorithm to translate a general and \textit{sensible} order-sorted algebra (an order-sorted algebra are usually \textit{sensible}) to a many-sorted one. However, this naive algorithm deals with the most general cases, so it adds a lot more sorts and rewrite rules than needed in more restricted cases. In some order-sorted algebras, if some rewrite rules have many sorts and the sorts have many subsorts, it can cause their algorithm to generate exponentially many rewrite rules. Even though the chance of this extreme situation is rare for a normal order-sorted algebra, their algorithm squares or cubes the number of equations and rewrite rules when they translate an order-sorted algebra to a many-sorted one, which is not desirable.

Our main goal is to marry the world of people defining language specifications using order-sorted algebras with that of people using theorem provers to develop theories about language specifications by using many-sorted algebras. In order to succeed, our translation of an order-sorted algebra must be understood by the people who are using the theorem provers. Making a many-sorted algebra with relatively the same number of rewrite rules would significantly reduce the users' efforts to understand the translated language specifications.
That is the reason for us to present a way to translate an interesting subset of order-sorted algebras into many-sorted algebras that  increase the number of the equations by less than a linear factor and keep the number of rewrite rules the same.

By requiring the target order-sorted algebra to be \textit{strictly sensible},
the basic idea of our algorithm is to view the subsort relation $s\leq s'$ defined in an order-sorted algebra as the implicit coercion of a term in the subsort $s$ to a term in the supersort $s'$. Then, we borrow the idea of constructors as a way of explicit coercion from other functional programming languages, such as Standard ML \cite{Milner:1997:DSM:549659}. We add an explicit coercion with a constructor for each subsort relation and view these subsort relations as unary operators in the translated many-sorted algebra. After that, we add a new equivalence relation for operators, which we call \textit{core equality}. 
\textit{Core equality} allows users to equate two terms as long as their core parts (not counting the generated subsort unary operator parts) are the same. By this translation process, we are able to translate a valuable subset of order-sorted algebras into many-sorted ones. Specifically, we are able to translate all those valuable language specifications in \K mentioned above into ones in Isabelle/HOL.

It is worth noting that the reason for us to require the target order-sorted algebra to be \textit{strictly sensible} is that we want to outline a subset of order-sorted algebras that can be translated into some many-sorted algebras easily and concisely, as well as being able to prove the bi-simulation between the order-sorted algebras and the translated many-sorted ones in this case. Since there is a naive algorithm proposed by Meseguer and Skeirik to translate a general and \textit{sensible} order-sorted algebra to a many-sorted one, by a little engineering work, one can always divide an order-sorted algebra into a part that is \textit{strictly sensible} and another part that is not \textit{strictly sensible} but \textit{sensible}, and translate the first part by using our algorithm and the second part by using the naive algorithm. We do not specify the engineering task in this paper, because we want to focus on the theories of discovering a subset of order-sorted algebras that can be translated into many-sorted ones easily and concisely.

\section{The Scope of the Solution}
\label{problem}

In this section, we describe the preliminaries related to the problem. The basic idea is to find a translation function $tr$ to translate an order-sorted algebra to a many-sorted one and preserve the meaning of the former one in the latter one.
We first define term algebras for many-sorted algebras and order sorted algebras in Definitions~\ref{def0} and \ref{def0.1}, respectively. A term algebra is a trivial algebra that defines the terms allowed in an algebra without variables.

\begin{definition}\label{def0}

A \textit{sorted signature} is a tuple of $(S,\Phi,\Sigma)$, where $S$ is a set of sorts, $\Phi$ is a finite set of constructors, and $\Sigma$ is the set of all operators in the system, where an operator is of the form $f:s_1 \times ... \times s_n \rightarrow s$, where $f$ is a constructor defined in set $\Phi$, $s_1,...,s_n$ is a list of argument sorts and $s$ is the target sort. Sorts $s_1,...,s_n$ and $s$ are elements of set $S$. Sometimes we use $\Sigma$ to refer to the signature. The \textit{sorted ground term algebra} of $(S,\Phi,\Sigma)$ is the set of terms $T_{\Sigma}$ equal to $\cup_{s\in S} (T_{\Sigma,s})$, where the sets  $(T_{\Sigma,s})$ are mutually defined by:

(1) For each operator $a:nil \rightarrow s \in \Sigma$, the constructor $a \in T_{\Sigma,s}$, where $nil$ means that the argument sort list of the operator is an empty list.

(2) For each non-zero arity operator $f:w \rightarrow s \in \Sigma$, where $w=s_1 \times ... \times s_n$ and $n>0$, and for each $(t_1,...t_n) \in T_{\Sigma,s_1}\times ... \times T_{\Sigma,s_n}$, the term $f(t_1,...,t_n) \in T_{\Sigma,s}$. 

\end{definition}

\begin{definition}\label{def0.1}

An \textit{order-sorted signature} is a tuple $(S,O,\Phi,\Sigma)$, where $(S,\Phi,\Sigma)$ is a \textit{sorted signature} and the set $O$ is a set of pairs of sorts, such that its reflexive and transitive closure $\leq$ forms a partial order. This means that $O$ cannot have cycles if we view the pairs of $O$ as defining a directed graph. The poset $(S,\leq)$ represents the subsort relations of the system. The \textit{order-sorted ground term algebra} of $(S,\Phi,\Sigma)$ is the least set of terms $T_{\Sigma}$ equal to $\cup_{s\in S} (T_{\Sigma,s})$, where the sets $(T_{\Sigma,s})$ are mutually defined by:

(1) For each operator $a:nil \rightarrow s \in \Sigma$, the constructor $a \in T_{\Sigma,s}$, where $nil$ means that the argument sort list of the operator is an empty list.

(2) For each non-zero arity operator $f:w \rightarrow s \in \Sigma$, where $w=s_1 \times ... \times s_n$ and $n>0$, and for each $(t_1,...t_n) \in T_{\Sigma,s_1}\times ... \times T_{\Sigma,s_n}$, the term $f(t_1,...,t_n) \in T_{\Sigma,s}$. 

(3) If $s \leq s'$, then $T_{\Sigma,s} \subseteq T_{\Sigma,s'}$.

\end{definition}

In Figure~\ref{fig:imp-K}, we show an example of an order-sorted signature: IMP, and list the sets of $S$, $O$, $\Phi$ and $\Sigma$ accordingly. Based on the signature, the order-sorted ground term algebra $T_{\Sigma}$ can be generated by the rules in Definition~\ref{def0.1}. If we drop the $O$ set, the signature becomes a sorted signature, and we can generate the sorted ground term algebra $T_{\Sigma}$ by the rules in Definition~\ref{def0}. In our version of the IMP language, we assume that all identifiers in a given term have been initialized. To make the IMP language simple enough, we do not provide semantics for how to lookup the value for an identifier. Instead, we assume that there is a $\pmb{guess}$ function that will guess a value for an identifier, which happens to be the same as the value previously defined for the identifier. Finally, we use the operator $\pmb{-}$ to mean both an integer negative sign and a negation of a boolean formula, as well as $\pmb{+}$ to mean both an arithmetic addition operator and a conjunctive boolean operator, in order to show how we deal with overloaded operators.

\begin{figure}[h]
{$$
\begin{array}{ll}
S:&\{\natType, \intType, \AExp, \Id, \Bool, \BExp, \Block, \Stmt, \Map, \Pgm\}\\
O:&\{\natType < \intType, \intType < \AExp, \;\Id < \AExp, \;\Bool < \BExp, \;\Block < \Stmt\}\\
\Phi:&\{\pmb{v},\;\pmb{true},\;\pmb{false},\;0,\;\pmb{s},\;\pmb{+},\;\pmb{-}, \;\pmb{<=}, \;\pmb{\{\}},\;\pmb{\{\_\}},\;\pmb{\_=\_;},\;
       \pmb{\_\,\_},\;\pmb{if\_else},\;\pmb{\_,\_},\;\pmb{.Map},\\
    &\pmb{guess},\;\pmb{<\_,\_>},\;\pmb{\_[\_/\_]},\;\pmb{\_\mapsto\_}\}\\
\Sigma:&\{\pmb{true}:\rightarrow\Bool,\;\;\pmb{false}:\rightarrow\Bool,\;\; 0:\rightarrow\natType,
    \;\; \pmb{s}:\natType\rightarrow\natType, \;\; -:\intType\rightarrow\intType, \;\; -:\natType\rightarrow\intType,\\
    &\pmb{+}:\AExp*\AExp\rightarrow\AExp,\;\;\pmb{+}:\natType*\natType\rightarrow\AExp,\;\;\pmb{+}:\intType*\intType\rightarrow\AExp,
      \;\;\pmb{\{\}}:\rightarrow\Block,\\
    &\pmb{<=}:\AExp*\AExp\rightarrow\BExp,\;\;\pmb{-}:\BExp\rightarrow\BExp,\;\;\pmb{-}:\Bool\rightarrow\BExp,
    \;\;\pmb{+}:\Bool*\Bool\rightarrow\BExp,\\
      &\pmb{v}:\natType\rightarrow\Id ,\;\;\pmb{\{\_\}}:\Stmt\rightarrow\Block,\;\;\pmb{\_=\_;}:\Id*\AExp\rightarrow\Stmt,
           \;\;\pmb{\_\,\_}:\Stmt*\Stmt\rightarrow\Stmt,\\&\pmb{if\_else}:\BExp*\Block*\Block\rightarrow\Stmt,
    \;\;\pmb{guess}:\Id\rightarrow\intType,\\
  &\pmb{<\_,\_>}:\Map *\Stmt\rightarrow\Pgm,\;\;
\pmb{\_,\_}:\Map *\Map\rightarrow\Map,\;\;\pmb{.Map}:\rightarrow\Map,\;\;\pmb{+}:\BExp*\BExp\rightarrow\BExp,\\
  &\pmb{\_[\_/\_]}:\Map *\intType *\Id\rightarrow\Map,\;\;\pmb{\_\mapsto\_}:\Id * \intType \rightarrow \Map
      \}\\
\end{array}
$$}
\vspace*{-3ex}
\caption{IMP Signature
\label{fig:imp-K}
}
\end{figure}

Based on the ground term algebra $T_{\Sigma}$, we define the terms with variables as $T_{\Sigma}(X)$. 
Given a term with variables $t(X) \in T_{\Sigma}(X)$, where all variables in $t(X)$ are contained in $X$,  term $t’ \in T_\Sigma$ is an \textit{instance} of $t(X)$ if there exists a substitution $h$ mapping $X$ to $T_\Sigma$ such that $t’$ is the result of replacing each variable $x$ in $t(X)$ by $h(x)$. Every variable in a term in $T_{\Sigma}(X)$ is represented by a name. Even though we refer to $T_{\Sigma}$ and $T_{\Sigma}(X)$ as term algebras in both many-sorted algebras and order sorted algebras, They are sorted term algebras in the many-sorted world and order sorted term algebras in the order sorted world.
It is worth noting that, while $\Sigma$ contains sort information, $T_{\Sigma}$ and $T_{\Sigma}(X)$ do not. A mapping function $x:s$ maps a variable $x$ to a sort $s$ representing the target sort of $x$. We now define a many-sorted algebra and an order sorted-algebra in Definitions~\ref{def1} and \ref{def2}, respectively. 
 
\begin{definition}\label{def1}

A \textit{many-sorted algebra} $B$ is defined by a tuple $(S,\Phi,\Sigma,E,R)$, where $S$ is a set of sorts, $\Phi$ is the finite set of constructors allowed in the system, $\Sigma$ represents all operators in the system and $(S,\Phi,\Sigma)$ is the many-sorted signature, which we refer to as $\Sigma$. The equation set $E$ is a set of pairs of terms in $T_{\Sigma}(X)$. and partitions the terms of $B$, $T_{\Sigma}$, into equivalence classes, denoted $T_{(\Sigma,E)}$. The terms allowed to construct each equation in $E$ are in sorted term algebra $T_{\Sigma}(X)$, while the equations are applied on the terms in the sorted ground term algebra $T_{\Sigma}$.
We introduce the quotient structure $T_{(\Sigma,E)}$, which we call terms $T_{\Sigma}$ modulo equations $E$. For two terms $t$ and $t'$ in $T_{\Sigma}$, if we can prove they are equal through the equations $E$, we say these two terms are equivalent modulo $E$, which partitions $T_{\Sigma}$ into different equivalence classes and forms $T_{(\Sigma,E)}$.
A set of rewrite rules $R$ defines the semantics of system $B$. The rule set $R$ is a set of pairs of terms in $T_{\Sigma}(X)$, while the rules are applied on the terms in $T_{(\Sigma,E)}$. If a rule $r\in R$ is applied to a term $t$, it means that the rule $r$ is applied on the class $c \in T_{(\Sigma,E)}$ where $t \in c$ and $t$ is the representative of $c$. The transition $c \longrightarrow_{r} c'$ means that for $t(X)$ as the left hand side and $t'(X)$ as the right hand side of rule $r$, there is a substitution $h$ mapping $X$ to $T_\Sigma$ such that $t$ and $t'$ are the result of replacing each variable $x$ in $t(X)$ and $t'(X)$ by $h(x)$ and $t \in c$ and $t' \in c'$, respectively. The rule $r$ generates an endomorphic relation, and applications of rules are closed under applications of constructors. 

\end{definition}

\begin{definition}\label{def2}

An \textit{order-sorted algebra} $A$ is a tuple $(S,O,\Phi,\Sigma,E,R)$, where $(S,\Phi,\Sigma)$ is a many-sorted signature, and $O$ is a set of pairs of sorts, such that the reflexive and transitive closure $\leq$ forms a partial order. The poset $(S,\leq)$ represents the subsort relations of the system. We call $(S,O,\Phi,\Sigma)$ the signature of the system, which we refer to as $\Sigma$. The terms allowed to construct each equation in $E$ are in the order-sorted term algebra $T_{\Sigma}(X)$, while the equations are applied on the terms in the order-sorted ground term algebra $T_{\Sigma}$. A set of rewrite rules $R$ defines the semantics of system $B$. The rule set $R$ is a set of pairs of terms in $T_{\Sigma}(X)$, while the rules are applied on the terms in $T_{(\Sigma,E)}$. The two elements of a pair in $E$ are required to have the same sort, while for any pair $(t,t')$ in $R$, the sort of $t'$ is a subsort of the sort of $t$. This property is called \textit{sort decreasing}. 
We introduce the quotient structure $T_{(\Sigma,E)}$, which we call terms $T_{\Sigma}$ modulo equations $E$. For two terms $t$ and $t'$ in $T_{\Sigma}$, if we can prove they are equal through the equations $E$, we say these two terms are equivalent modulo $E$, which partitions $T_{\Sigma}$ into different equivalence classes and forms $T_{(\Sigma,E)}$.
A set of rewrite rules $R$ defines the semantics of system $B$. The rule set $R$ is a set of pairs of terms in the order-sorted term algebra $T_{\Sigma}(X)$, while the rules are applied on the terms in $T_{(\Sigma,E)}$. If a rule $r\in R$ is applied to a term $t$, it means that the rule $r$ is applied on the class $c \in T_{(\Sigma,E)}$ where $t \in c$ and $t$ is the representative of $c$. The transition $c \longrightarrow_{r} c'$ means that for $t(X)$ as the left hand side and $t'(X)$ as the right hand side of rule $r$, there is a substitution $h$ mapping $X$ to $T_\Sigma$ such that $t$ and $t'$ are the result of replacing each variable $x$ in $t(X)$ and $t'(X)$ by $h(x)$ and $t \in c$ and $t' \in c'$, respectively. The rule $r$ generates an endomorphic relation, and applications of rules are closed under applications of constructors. 

The $\leq$ relation can be viewed as a directed graph where each relation is an edge. The graph may have different connected components. For any two sorts in a connected component in an order-sorted algebra, we require there is a unique top \textit{supersort} of them.

\end{definition}

In Figure~\ref{fig:imp-rules}, we show the equations and rules for the order-sorted algebra IMP. With the information in Figure~\ref{fig:imp-K}, this information constructs a well-defined order-sorted algebra. A many-sorted algebra is similar to this one with more restrictions. For example, the left hand side and right hand side of a rule need to be sort equivalent in a many-sorted algebra. In order to write an equation for the operator $\pmb{+}$, we need to write three versions of equations: one for $\pmb{+}$ with argument sorts $\AExp*\AExp$, one for it with argument sorts $\intType*\intType$ and one for argument sorts $\natType*\natType$.

\begin{figure}[h]
{$$
\begin{array}{ll}
E:&\{0 \pmb{+} A:\AExp = A:\AExp, \;\;\pmb{s}(A:\natType) \pmb{+} B:\natType = A:\natType \pmb{+} \pmb{s}(B:\natType),\;\;
  \pmb{-}\pmb{-}A:\intType = A:\intType,\\
  & A:\AExp \pmb{+} B:\AExp = B:\AExp \pmb{+} A:\AExp,\;\;\\
  & \pmb{s}(A:\natType) \pmb{+} \pmb{-} \pmb{s}(B:\natType)=A:\natType \pmb{+} B:\natType,
   \;\;\pmb{true}\pmb{+}A:\BExp= A;\BExp,\\
  &A:\BExp \pmb{+} B:\BExp = B:\BExp \pmb{+} A:\BExp,\;\;\pmb{\_,\_}(A:\Map, B:\Map) = \pmb{\_,\_}(B:\Map,A:\Map),\\
  &\pmb{s}(A:\natType)\pmb{<=} B:\AExp =
    0 \pmb{<=} B:\AExp \pmb{+}\pmb{-}\pmb{s}(A:\natType),\;\;
    \pmb{\_,\_}(A:\Map, \pmb{.Map})=A:\Map,\\
   & \pmb{-}\pmb{s}(A:\natType)\pmb{<=} B:\AExp =
    0 \pmb{<=} B:\AExp \pmb{+}\pmb{s}(A:\natType),\\
  &\pmb{\_,\_}(A:\Map, \pmb{\_,\_}(B:\Map, C:\Map)) = \pmb{\_,\_}(\pmb{\_,\_}(A:\Map, B:\Map), C:\Map),\\
 &\pmb{\_[\_/\_]}(\pmb{.Map}, A:\intType, B:\Id)=B:\Id \mapsto A:\intType,
\\&\pmb{\_[\_/\_]}(\pmb{\_,\_}(A:\Id \mapsto B:\intType, C:\Map), D:\intType, A:\Id)=
    \pmb{\_,\_}(A:\Id \mapsto D:\intType,C:\Map),\\
&\pmb{\_[\_/\_]}(\pmb{\_,\_}(A:\Id \mapsto B:\intType, C:\Map), D:\intType, E:\Id)
 \\
 &\qquad=\pmb{\_,\_}(A:\Id \mapsto B:\intType, \pmb{\_[\_/\_]}(C:\Map, D:\intType, E:\Id)),\;\;
  \pmb{\_\,\_}(\pmb{\{\}},A:\Stmt) = A:\Stmt,\\
 &\pmb{\_\,\_}(\pmb{\{\_\}}(A:\Stmt),B:\Stmt) = \pmb{\_\,\_}(A:\Stmt,B:\Stmt)
  \;\}\\
R:&\{\pmb{-} 0 \Rightarrow 0,\;\;A:\AExp\pmb{+}\pmb{v}(B:\Id) \Rightarrow A:\AExp\pmb{+}\pmb{guess}(B:\Id),\;\;
    \pmb{-}\pmb{true}\Rightarrow\pmb{false},\;\;\pmb{-}\pmb{false}\Rightarrow\pmb{true},\\
  & A:\AExp\pmb{<=}\pmb{v}(B:\Id) \Rightarrow A:\AExp\pmb{<=}\pmb{guess}(B:\Id),\;\;
    0\pmb{<=} A:\natType \Rightarrow \pmb{true},\\
  &0\pmb{<=} \pmb{-}\pmb{s}(A:\natType) \Rightarrow \pmb{false},\;\; 
    \pmb{v}(A:\Id)\pmb{<=} B:\AExp \Rightarrow \pmb{guess}(A:\Id)\pmb{<=} B:\AExp,
\\&\pmb{<\_,\_>}(A:\Map, \pmb{\_\,\_}(\pmb{if\_else}(\pmb{false},B:\Block,C:\Block),D:\Stmt))
   \\& \qquad\Rightarrow\;\pmb{<\_,\_>}(A:\Map, \pmb{\_\,\_}(C:\Block,D:\Stmt)),
\\&\pmb{<\_,\_>}(A:\Map, \pmb{\_\,\_}(\pmb{if\_else}(\pmb{true},B:\Block,C:\Block),D:\Stmt))
   \\& \qquad\Rightarrow\;\pmb{<\_,\_>}(A:\Map, \pmb{\_\,\_}(B:\Block,D:\Stmt)),
  \\&\pmb{<\_,\_>}(A:\Map, \pmb{\_\,\_}(\pmb{\_=\_;}(B:\Id,C:\intType),D:\Stmt))
   \\& \qquad\Rightarrow\;\pmb{<\_,\_>}(\pmb{\_[\_/\_]}(A:\Map,C:\intType,B:\Id), D:\Stmt),\\
  &\pmb{-} A:\intType \pmb{+} \pmb{-}B:\intType \Rightarrow \pmb{-} (A:\intType \pmb{+} B:\intType),
    \;\;\pmb{false}\pmb{+}A:\BExp\Rightarrow \pmb{false}
\;\}\\
\end{array}
$$}
\vspace*{-3ex}
\caption{IMP Order-Sorted Equations and Rules
\label{fig:imp-rules}
}
\end{figure}

In a many-sorted algebra and order-sorted algebra, even though the terms that are used to construct an equation or a rule is in the form of $T_{\Sigma}(X)$, they are representatives of equivalence classes in $T_{(\Sigma,E)}$. 
One thing to keep in mind is that we are defining algebras in this paper, not transition systems. The rewrite rules in an algebra can be applied to any subterm of a given term, not only to its top-most operator. This idea is similar to the rewrite rules in Rewriting Logic \cite{martioliet00rewriting}. 
Based on the order-sorted algebra definition, the only input restriction of our translation function $tr$ is that the order-sorted algebra $A$ should be, not just \textit{sensible}, but \textit{strictly sensible}. The former term is defined in Definition~\ref{def3}, while the latter is defined in Definition~\ref{def4}.
One thing about \textit{overloaded operators} (two operators having the same constructor) in an algebra is that if the two overloaded operators $f$ and $f'$ have argument sorts that have no common supersorts, we treat them as different operators since they can be easily distinguished by combining the constructor and the list of argument sorts. 

\begin{definition}\label{def2.5}

We define two overloaded operators $f$ and $f'$ to be \textit{argument compatible}, if they have the same arities, and $f$ has argument sorts $s_1,...,s_n$, and $f'$ has argument sorts $s'_1,...,s'_n$, and $s_i\equiv_{\leq} s'_i$ for $i = 1,...,n$, where $\equiv_{\leq}$ means that the two given sorts have a common supersort.

\end{definition}

\begin{definition}\label{def3}

(Goguen and Meseguer \cite{Goguen:1992:OAI:146982.146984}) An order-sorted algebra is \textit{sensible}, if for any pair of argument compatible constructors $f$ and $f'$ with target sorts $s$ and $s'$, respectively, we have $s\equiv_{\leq} s'$.

\end{definition}

\begin{definition}\label{def4}

An order-sorted algebra is \textit{strictly sensible} if:

(1) Whenever there are two argument compatible operators $f$ and $f'$ with target sorts $s$ and $s'$, respectively, then we have $s = s'$. We then call the order-sorted algebra being \textit{strong sensible}. It is worth noting that a strong sensible algebra cannot have overloaded constant operators. 

(2) For each operator $f$, there exists an operator $f': s_1\times ... \times s_n \rightarrow s$, such that for every operator $f''$ being argument compatible with $f$, $f'$ is argument compatible with $f''$, and if $f''$ has argument sorts $s'_1\times ... \times s'_n$, then $s'_i \leq s_i$ for all $i = 1,...,n$. We then call the order-sorted algebra being \textit{maximal argument-bounding}.

\end{definition}

The order-sorted algebra in Figures~\ref{fig:imp-K} and~\ref{fig:imp-rules} is strictly sensible, but if we change the operator $\pmb{+}:\natType*\natType\rightarrow\AExp$ to be $\pmb{+}:\natType*\natType\rightarrow\natType$, the algebra becomes sensible but not strictly sensible. The second condition of the strictly sensible definition is not necessary for the translation algorithm in this paper, but it ensures that the translated many-sorted algebra from a \textit{strictly sensible} order-sorted algebra are bisimilar. Without the condition, the translated many-sorted algebra simulates the order-sorted algebra, but there might be some behaviors in the translated many-sorted algebra that cannot be observed in the original order-sorted algebra. The first condition is the key distinction between a sensible order-sorted algebra and a strictly sensible order-sorted algebra. In the translation algorithm described in this paper, we rule out the possibility for users to define overloaded operator pairs like $\pmb{+}:\AExp*\AExp\rightarrow\AExp$ and $\pmb{+}:\natType*\natType\rightarrow\natType$. 

The reason that we are willing to accept the \textit{strictly sensible} restriction is that users only need the limited world in defining language specifications from scratch usually. This is a real restriction that will affect some situations, because without the restriction of strictly sensible, we can define two overloaded $\pmb{+}$ operators $\pmb{+}\;:\;\intType*\intType\rightarrow \intType$ and $\pmb{+}\;:\;\natType*\natType\rightarrow \natType$, where $\intType$ and $\natType$ have a subsort relation. However, there are no such operators in the order-sorted specifications of C \cite{ellison-rosu-2012-popl}, PHP \cite{Filaretti2014}, JavaScriptv\cite{park-stefanescu-rosu-2015-pldi}, and Java \cite{bogdanas-rosu-2015-popl} in \K.
In addition, the operators, such as $\pmb{+}\;:\;\intType*\natType\rightarrow \intType$ and $\pmb{+}\;:\;\natType*\intType\rightarrow \natType$, are usually defined as different operators with different names by users. Even though we are able to solve them easily by adding more rules and creating more sorts, such as what the algorithm of Meseguer and Skeirik \cite{Meseguer2017} does, we do not want to take that approach because the whole point of the translation is to have a many-sorted algebra that is concise enough for users to use and read the translated language specifications. Squaring or cubing the size of the rewrite rules is quite undesirable. 

On the other hand, it does not mean that we cannot translate order-sorted algebras that are \textit{sensible} but not \textit{strictly sensible} to many-sorted algebras. The paper by Meseguer and Skeirik \cite{Meseguer2017} gives us an naive solution to cover all interesting translation cases. By giving an order-sorted algebra, people can divide it by collecting all operators that do not match with the \textit{strictly sensible} requirements and translate these operators by using the naive algorithm introduced by Meseguer and Skeirik, and then translate the rest of the algebra by using our algorithm, and it will reduce the size of the translated many-sorted algebra disregarding the set of operators that do not match with the \textit{strictly sensible} requirements. We do not research in deep in this path because this is just an engineering task which requires a little careful design. The paper mainly focuses on coming up with a subset of the order-sorted algebras that can be translated to many-sorted ones easily and proving their bi-simulation to the translated many-sorted ones.

Now, we can formally state the properties of the translation function $tr$ to be: given a strictly sensible order-sorted algebra $A$ and a translation function $tr$ applied on $A$, we have a many-sorted algebra $B$ such that $B=tr(A)$, and for any rule $r_A$ in $A$, if term $t_A$ in $A$ can transition to $t'_A$ through rule $r_A$, such that $t_A \longrightarrow_{r_A} t'_A$, then we have $tr(r_A)$ is a rule in $B$ and 
$tr(t_A) \longrightarrow_{tr(r_A)} tr(t'_A)$. The output of our translation is a many-sorted algebra: $B$, where the rewrite rules of $A$ and $B$ have the above relation.

\section{Translation and Proofs}
\label{translation}

In this section, a description of the translation function $tr$ and some theories about it are given. For a given order-sorted algebra $A$ with $(S,O,\Phi,\Sigma,E,R)$, we do not need to translate the sort set $S$ because our translation does not change sorts at all. We eliminate the relation $O$, and we have the functions $tr_{\Sigma}$, $tr_E$ and $tr_R$ for translating operator definitions, equations and rewrite rules. 

\paragraph{\textsf{Translating Operators.}} Operators are translated in two steps, such that $tr_{\Sigma}=tr^{\#}_{\Sigma} \circ tr'_{\Sigma}$. The first step $tr'_{\Sigma}$ is to find a \textsf{maximal argument-bounding} operator $f$ for every operator $f'$. Since the \textit{strictly sensible} assumptions require any pair of \textit{argument compatible} operators $f'$ and $f''$ to have the same target sort, we restrict the nature of the argument sorts in these overloaded operators by picking its \textsf{maximal argument-bounding} operator $f$ as a representative for any \textit{argument compatible} operator $f'$. We then eliminate the operator $f'$ if $f$ is different from $f'$. Hence, if $\Sigma'= tr'_{\Sigma}$, then $\Sigma'$ has fewer operators than $\Sigma$ and for every overloaded operator set, whose elements are \textit{argument compatible}, $\Sigma'$ picks exactly one representative operator for it. If the overloaded operators are not \textit{argument compatible}, then we distinguish them by picking different constructors in the translated many-sorted algebra. 

For example, in the order-sorted algebra in Figures~\ref{fig:imp-K} and~\ref{fig:imp-rules}, there are five different overloaded operators with $\pmb{+}$ constructor, where $\pmb{+}:\AExp*\AExp\rightarrow\AExp$, $\pmb{+}:\natType*\natType\rightarrow\AExp$ and $\pmb{+}:\intType*\intType\rightarrow\AExp$ are \textit{argument compatible}, while $\pmb{+}:\Bool*\Bool\rightarrow\BExp$ and $\pmb{+}:\BExp*\BExp\rightarrow\BExp$ are also 
\textit{argument compatible}. These two groups of $\pmb{+}$ operators are not \textit{argument compatible} cross groups. When translating these operators, we first pick $\pmb{+}:\AExp*\AExp\rightarrow\AExp$ and $\pmb{+}:\BExp*\BExp\rightarrow\BExp$ as the representatives for the first and second group, then we change the name of the first one to $\pmb{+AExp}$ and the second one to $\pmb{+BExp}$ to avoid conflicts in constructor names. Finally, in the translated many-sorted signature, we have two translated operators for the original overloaded operators with $\pmb{+}$ constructor, they are:  $\pmb{+AExp}:\AExp*\AExp\rightarrow\AExp$ and $\pmb{+BExp}:\BExp*\BExp\rightarrow\BExp$.

The translation $tr^{\#}_{\Sigma}$ translates a given $\Sigma'$ (a signature translated by $tr'_{\Sigma}$ on an order-sorted signature $\Sigma$) by adding operators. For each pair defined in set $O$ as $(s,s')$, which is a subsort relation $s \leq s'$, we create one more unary operator $\textbf{Cast\_s\_to\_s'}:s \rightarrow s'$ that does not appear in $\Sigma'$. This operator has argument sort $s$ and target sort $s'$. The result signature $\Sigma^{\#}=tr^{\#}_{\Sigma}(\Sigma')$ contains a set of newly generated unary operators being bijective with the pairs in $O$.

For example, when translating the order-sorted algebra in Figures~\ref{fig:imp-K} and~\ref{fig:imp-rules}, we add the following five unary operators: $\pmb{Cast\_nat\_to\_int}:\natType\rightarrow\intType$, $\pmb{Cast\_int\_to\_AExp}:\intType\rightarrow\AExp$, $\pmb{Cast\_Id\_to\_AExp}:\Id\rightarrow\AExp$, $\pmb{Cast\_bool\_to\_BExp}:\Bool\rightarrow\BExp$ and $\pmb{Cast\_Block\_to\_Stmt}:\Block\rightarrow\Stmt$, since there are exactly five tuples in the set $O$: $\natType < \intType$, $\intType < \AExp$, $\Id < \AExp$, $\Bool < \BExp$ and $\Block < \Stmt$.

Similar to the theorems in the signature translation of the paper of Meseguer and Skeirik, we also have the following theorem about the final result $\Sigma^{\#}$. The proof of the theorem about is similar to the one in the paper of Meseguer and Skeirik, and is a direct result of the \textit{strictly sensible} requirement of an order-sorted algebra and our translation of the operators in the algebra.

\begin{theorem}

Let $\Sigma$ be an order-sorted signature, $\Sigma^{\#}$ is the translated many-sorted algebra of it.
All overloaded operators (viewing $\pmb{+AExp}$ and $\pmb{+BExp}$ to be overloaded operators) in $\Sigma^{\#}$ have at least one argument position having distinct argument sorts that have no common supersort in the original order-sorted algebra.

\end{theorem}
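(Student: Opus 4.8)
The plan is to unwind the two stages $tr_{\Sigma}=tr^{\#}_{\Sigma}\circ tr'_{\Sigma}$ of the operator translation and reduce the claim to the negation of Definition~\ref{def2.5}, exploiting the fact that in an order-sorted algebra as defined in this paper the relation $\equiv_{\leq}$ is transitive. First I would note that every operator of $\Sigma^{\#}$ is either one of the new unary cast operators added by $tr^{\#}_{\Sigma}$ (one for each pair of $O$) or the image under $tr'_{\Sigma}$ of an operator of $\Sigma$, since $tr^{\#}_{\Sigma}$ only adds operators and leaves those coming from $\Sigma'$ untouched. Each cast operator carries a fresh constructor that is in bijection with a pair of $O$ and, by construction, does not occur in $\Sigma'$, so no cast operator shares a (real or nominal) constructor with any other operator of $\Sigma^{\#}$, and the cast operators can be set aside. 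It therefore suffices to take two distinct operators $f^{\#},g^{\#}$ of $\Sigma^{\#}$ that descend via $tr'_{\Sigma}$ from operators $f,g$ of $\Sigma$ sharing one constructor $c$ --- which is precisely what ``viewing $\pmb{+AExp}$ and $\pmb{+BExp}$ as overloaded'' means, since both descend from the single constructor $\pmb{+}$ --- and to exhibit an argument position at which the argument sorts of $f^{\#}$ and $g^{\#}$ have no common supersort in $A$.

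For the core step, recall that $tr'_{\Sigma}$ replaces each class of mutually \emph{argument compatible} operators with constructor $c$ by its single \emph{maximal argument-bounding} representative, which exists because $A$ is strictly sensible (Definition~\ref{def4}(2)); and when two such operators are not argument compatible, the translation gives them distinct constructors. Hence, whenever $f^{\#}\neq g^{\#}$, the operators $f$ and $g$ cannot lie in the same argument-compatible class, i.e.\ $f$ and $g$ are not argument compatible. By the negation of Definition~\ref{def2.5} this means either $f$ and $g$ have different arities --- then so do $f^{\#}$ and $g^{\#}$, which are trivially distinguished --- or they share an arity $n$, with argument sorts $s_1,\dots,s_n$ and $s'_1,\dots,s'_n$, and there is an index $i$ with $s_i\not\equiv_{\leq}s'_i$.

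It remains to transport the witness $i$ from $f,g$ to $f^{\#},g^{\#}$. Write $t_1,\dots,t_n$ and $u_1,\dots,u_n$ for the argument sorts of $f^{\#}$ and $g^{\#}$. Since $f^{\#}$ is the maximal argument-bounding representative of the class of $f$ (and $g^{\#}$ that of $g$), Definition~\ref{def4}(2) gives $s_j\leq t_j$ and $s'_j\leq u_j$ for all $j$; in particular $s_i\equiv_{\leq}t_i$ and $s'_i\equiv_{\leq}u_i$, because a sort always has a common supersort with any of its supersorts. Now in an order-sorted algebra of this paper every connected component of $(S,\leq)$ has a unique top supersort (last paragraph of Definition~\ref{def2}), so $\equiv_{\leq}$ coincides with ``lying in the same connected component'' and is in particular transitive. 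If we had $t_i\equiv_{\leq}u_i$ we could chain $s_i\equiv_{\leq}t_i\equiv_{\leq}u_i\equiv_{\leq}s'_i$ to get $s_i\equiv_{\leq}s'_i$, contradicting the choice of $i$; hence $t_i\not\equiv_{\leq}u_i$. Since the translation alters neither $S$ nor $\leq$, position $i$ witnesses the theorem for $f^{\#},g^{\#}$ in the original order-sorted algebra, as required.

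The one genuinely delicate point --- the ``hard part'' --- is this transport step: it relies on the representatives selected by $tr'_{\Sigma}$ having argument sorts that dominate those of every operator in their class (so the witness index survives the replacement), and on the transitivity of $\equiv_{\leq}$, which is exactly where the ``unique top supersort per connected component'' hypothesis on order-sorted algebras is used. Everything else reduces to the bookkeeping above about which operators of $\Sigma^{\#}$ can possibly share a real or nominal constructor.
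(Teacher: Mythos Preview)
Your proof is correct and follows essentially the same line as the paper's brief sketch: distinct non-cast operators in $\Sigma^{\#}$ sharing an original constructor represent distinct argument-compatibility classes and hence are not argument compatible in $\Sigma$. Your version is more careful---notably in making the transitivity of $\equiv_{\leq}$ explicit via the unique-top-supersort hypothesis, which the paper glosses over---though the transport step could be shortened by taking $f=f^{\#}$ and $g=g^{\#}$ from the outset, since the maximal argument-bounding representatives are themselves operators of $\Sigma$ with constructor $c$.
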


\begin{proof}

The proof of the theorem about our translation is similar to the one in the paper of Meseguer and Skeirik \cite{Meseguer2017}. Here, we only sketch why it is true. The theorem is a direct result of the \textit{strictly sensible} requirement of an order-sorted algebra and our translation of the operators in the algebra. If an order-sorted algebra is \textit{strictly sensible}, then two overloaded operators are \textit{argument compatible}. After the translation, only one representing operator is selected, so there cannot be two operators being \textit{argument compatible} in the original order-sorted algebra. 

\end{proof}

\paragraph{\textsf{Translating Terms and Equations.}} After translating order-sorted signatures to many-sorted ones, we generate terms for the translated many-sorted algebras as the same way to generate sorted ground term algebras in Definition~\ref{def0}. 
Once we have all valid terms for the translated many-sorted algebras, we can define a function $tr_E$ to translate every equation in $E$ to $E^{\#}$ and also add a set of \textit{core equality} equations to $E^{\#}$. 
After the translation, the terms in the translated many-sorted algebra with the new equation set $E^{\#}$ form a term algebra $T_{(\Sigma^{\#},E^{\#})}$, and $T_{(\Sigma^{\#},E^{\#})}$ also represents the union of term sets for each sort $s \in S$ as $T_{(\Sigma^{\#},E^{\#},s)}$. In the quotient structure $T_{(\Sigma^{\#},E^{\#})}$, the equivalence classes are partitioned by the combination effects of equations $E^{\#}$ and sorts $S$. 

First, the translation $tr_E$ adds equations to the equation set $E$ to generate $E^{\#}$. The new equations are related to the idea of the core of a term. In order to define the core of a term, we first define non-core constructors as the new constructors generated during the operator translation $tr^{\#}_{\Sigma}$. The core constructors are the constructors of the normal operators of $\Sigma$. The core part of a term is the top most $t$ of a term $C_1 (... (C_n (t)) ...)$, where $C_1,...,C_n$ are unary non-core constructors. We now show the definition of \textit{core equality}. 

\begin{definition}

If there are two lists of unary non-core constructors $C_1,...,C_n$ and $K_1,...,K_m$, such that $t=C_1 (... (C_n(x)) ...)$ and $t'=K_1 (... (K_m(x))...)$ are well-formed, i.e., the input sort of $C_i$ is equal to the output sort of $C_{i+1}$ for all $i = 1,...,n-1,...$ and the input sort of $K_j$ is equal to the output sort of $K_{j+1}$ for all $j = 1,...,m-1,...$, as well as $C_1$ and $K_1$ has target sort $s$, $C_n$ and $K_m$ has input sort $s'$, then for each pair of directed paths from $s'$ to $s$ in the graph of $\leq$ in the original order-sorted algebra, i.e., $s' \leq s$, if there are two different paths from $s'$ to $s$ in the graph, we have an equation $C_1 (... (C_n(x:s')) ...) = K_1 (... (K_m(x:s'))...)$. The congruence closure of all these equations is \textit{core equality}.

\end{definition}

\begin{theorem}
\textit{Core equality} is an equivalence relation.
\end{theorem}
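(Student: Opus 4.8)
The plan is to unwind the definition. \textit{Core equality}, which I will write $\approx$, is by construction the \emph{congruence closure} of a fixed family $\mathcal{G}$ of equations between cast chains of the form $C_1(\cdots C_n(x\!:\!s')\cdots) = K_1(\cdots K_m(x\!:\!s')\cdots)$, one for each pair of distinct directed paths $s' \leq s$ through the graph of $\leq$ in the original order-sorted algebra. Since the congruence closure of any set of equations is, by its standard inductive generation, reflexive, symmetric, transitive, and compatible with every operator, the statement ``$\approx$ is an equivalence relation'' is essentially immediate; to keep the argument self-contained I would nonetheless verify the three properties directly, since only transitivity carries any content.

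Reflexivity holds because $t \approx t$ is one of the base clauses of the congruence-closure construction; concretely it is also witnessed by the degenerate member of $\mathcal{G}$ obtained when the two paths coincide, in which case the two sides of the equation literally agree, applied inside the context above the core of $t$. Symmetry is part of the congruence-closure construction as well, and is also visible from the fact that the generating equations can be oriented either way (equivalently, the quantification over pairs of paths may be taken to include both orders), so the generated congruence is closed under swapping the two sides. For transitivity, suppose $t_1 \approx t_2$ and $t_2 \approx t_3$. Unfolding the closure, there are finite sequences of elementary steps — instances of members of $\mathcal{G}$ placed inside a term context — linking $t_1$ to $t_2$ and $t_2$ to $t_3$; concatenating these sequences links $t_1$ to $t_3$, which is exactly the transitive-closure clause of the construction, so $t_1 \approx t_3$. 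The one underlying fact this relies on is that the cast-chain generators are themselves transitively consistent: if the $C$-chain equals the $K$-chain via two paths $p_1,p_2$ from $s'$ to $s$, and the $K$-chain equals the $L$-chain via paths $p_2,p_3$, then $p_1$ and $p_3$ are again directed paths from $s'$ to $s$, so the equation identifying the $C$-chain with the $L$-chain is either another member of $\mathcal{G}$ (when $p_1 \neq p_3$) or the trivial identity (when $p_1 = p_3$); either way $t_1$ and $t_3$ end up related.

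I expect no real obstacle here: once \textit{core equality} is \emph{defined} as a congruence closure, being an equivalence relation is forced, and the only care needed is the bookkeeping above — that $\mathcal{G}$ is presented symmetrically and that ``being a directed path from $s'$ to $s$'' is closed under composition, so that the concrete reading of transitivity also goes through. Both are immediate because $\leq$ is a partial order (in particular acyclic) on a finite sort set. If a more informative proof is wanted, an alternative route is to first establish the explicit characterization ``$u \approx v$ iff $u$ and $v$ have the same core and the same sort'' — where the interesting direction is showing the congruence closure relates nothing beyond this, using that every generator preserves both the core and the overall sort — and then to observe that this characterization is visibly reflexive, symmetric, and transitive; but the congruence-closure argument above is the shorter path to the stated theorem.
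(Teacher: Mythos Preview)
Your proposal is correct. Since \textit{core equality} is defined as the congruence closure of a set of equations, reflexivity, symmetry, and transitivity are indeed immediate from the standard inductive generation of congruence closure; your additional check that the generating set $\mathcal{G}$ is ``transitively consistent'' is harmless but not strictly needed once you appeal to the closure.

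The paper takes a slightly different route, essentially the one you sketch as your ``alternative.'' Rather than invoking that congruence closure is an equivalence by construction, it argues transitivity concretely at the level of cast chains: given $t =_{core} t'$ and $t' =_{core} t''$, it writes each as a chain of non-core constructors applied to a core, observes that the three cores coincide (modulo $E^{\#}$) and that the endpoint sorts of all three chains agree, and then concludes $t =_{core} t''$ directly. In effect the paper is implicitly using the characterization ``same core and same outer/inner sort'' that you mention, without first isolating it as a lemma. Your congruence-closure argument is shorter and more robust to the exact formulation of the generators; the paper's argument, on the other hand, makes the structural reason for transitivity visible and foreshadows how cast chains are manipulated in the later bisimulation proof.
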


\begin{proof}

The proof of reflexivity and symmetricity of \textit{core equality} is simple. Here, we only show the transitivity proof. 
By giving $t=_{core}t'$ and $t'=_{core}t''$, we have $t=C_1 (... (C_n(t_c)) ...)$, $t'=K_1 (... (K_m(t_k)) ...)$ and $t'' = G_1 (... (G_p(t_g)) ...)$ being well formed, $C_1,...,C_n$, $K_1,...,K_m$ and  
$G_1,...,G_p$ are lists of non-core unary constructors, and $C_1 (... (C_n(t_c)) ...) =_{core} K_1 (... (K_m(t_k)) ...)$ and $K_1 (... (K_m(t_k)) ...) =_{core} G_1 (... (G_p(t_g)) ...)$, then $t_c = t_k = t_g$ under the assumption of $E^{\#}$, and the output sorts of $C_1$, $K_1$ and $G_1$ are the same as well as the input sort of $C_n$, $K_m$ and $G_p$ are the same; hence, $C_1 (... (C_n(t_c)) ...) =_{core} G_1 (... (G_p(t_g)) ...)$ and \textit{core equality} is transitive. Thus, \textit{core equality} is an equivalence relation. 

\end{proof}

The reason of having \textit{core equality} is that we have new generated terms due to inserting non-core operators to generate terms in a sort by terms in the subsorts of the sort. Semantically, these new terms are translated from the same term in the original order-sorted algebra. If we cannot equate them, it means that after the translation, we have some terms with different meanings that originally belong to the same term. The way to equate these terms is to put them into the same equivalence classes by using equations defined by \textit{core equality}.

For example, when translating the order-sorted algebra in Figures~\ref{fig:imp-K} and~\ref{fig:imp-rules}, the generated unary operators, $\pmb{Cast\_nat\_to\_int} : \natType\rightarrow\intType$, $\pmb{Cast\_int\_to\_AExp}:\intType\rightarrow\AExp$, $\pmb{Cast\_Id\_to\_AExp}:\Id\rightarrow\AExp$, $\pmb{Cast\_bool\_to\_BExp}:\Bool\rightarrow\BExp$ and $\pmb{Cast\_Block\_to\_Stmt}:\Block\rightarrow\Stmt$, are non-core constructors and operators, while the original operators are core ones. To generate the set of \textit{core equality} equations for the order-sorted algebra, we have a practical way to do it; that is to examine the $\leq$ relation. For every two nodes in $\leq$, if there is more than one path from the first node to the second one, we add equations to connect them. In the order-sorted algebra in Figures~\ref{fig:imp-K} and~\ref{fig:imp-rules}, if we have one more sort $\realType$ and two more subsort relations, $\natType < \realType$ and $\realType < \AExp$, then two paths can go from $\natType$ to $\AExp$ in $\leq$. We add an equation $\pmb{Cast\_int\_to\_AExp}(\pmb{Cast\_nat\_to\_int}(A:\natType))=\pmb{Cast\_real\_to\_AExp}(\pmb{Cast\_nat\_to\_real}(A:\natType))$ to $E^{\#}$. By doing a rough counting, we can see that the number of new equations adding into the set $E^{\#}$ is less than $|S|^2$, since the number of elements in set $O$ is bound to $|S|^2$, and we add an core equality equation if and only if there is a diamond relation in the set $O$: there exist different sorts $D$, $E$, $A_1,...,A_n$, $C_1,...,C_m$ such that, $D < A_1 < ... < A_n < E$, $D < C_1 < ... < C_m < E$ and $A_i \neq C_i$ for all $i=$ $1 ... min(n,m)$. 

After we have \textit{core equality}, we can translate terms of two sides of an equation in $E$. The two sides belong to $T_{\Sigma}(X)$. We define a translation function $tr_{term}$ to translate a term in $T_{\Sigma}(X)$ to a term in $T_{\Sigma^{\#}}(X)$ for every side of an equation in $E$.
For every sub-term, having the form $f(t_1,...,t_i,...,t_m)$, of a term $t$ in $T_{\Sigma}(X)$, if we compare the constructor $f$ with the set of operators in the translated many-sorted signature $\Sigma^{\#}$, there is a unique operator $f:(s_1,...s_i,...,s_m)$ (the translated many-sorted signature only keeps one operator if there is a set of \textit{argument compatible} operators, and if there are overloaded operators that are not \textit{argument compatible} in the original order-sorted algebra, we can also find the unique $f'$ that is translated from the original $f$ by comparing the sort of $s_i$ with the sort of $t_i$, because overloaded operators that are not \textit{argument compatible} are translated into two different operators by distinguishing them with more information in the constructors, and they must have at least one argument position with different sorts).
 If the sort of a position $i$ in the sub-term $f(t_1,...,t_i,...,t_m)$ is defined with sort $s$ according to the operator signature above, but $t_i$ actually has target sort $s'$ and $s' \le s$, then we find a list of non-core unary constructors $C_1, ...,C_n$ to cast the sub-term to sort $s$ as $f(t_1,...,C_1 ( ...(C_n (t_i))...),...,t_m)$ in a well-formed way. If $s'=s$, then we do not need to find the constructors. We know that such a sequence of unary constructors must exist because the set of non-core unary constructors is bijective with the pairs in $O$, and $\leq$ is the reflexive and transitive closure of $O$. If $s' \leq s$, there is a list of pairs in $O$ as $(s',s_1),...,(s_{n-1},s)$. Through the list, $s'$ reaches $s$. For each pair in the list, we have generated a unary constructor. Hence, the sequence of constructors $C_1,...,C_n$ is exactly the constructors generated for pairs $(s',s_1),...,(s_{n-1},s)$. For example, when translating the order-sorted algebra in Figures~\ref{fig:imp-K} and~\ref{fig:imp-rules}, the equation $\pmb{s}(A:\natType) \pmb{+} B:\natType$ $= A:\natType \pmb{+} \pmb{s}(B:\natType)$ is translated to $\pmb{Cast\_int\_to\_AExp}(\pmb{Cast\_nat\_to\_int}(\pmb{s}(A:\natType)))$ $\pmb{+}$ $\pmb{Cast\_int\_to\_AExp}$ $(\pmb{Cast\_nat\_to\_int}(B:\natType))$ $= \pmb{Cast\_int\_to\_AExp}(\pmb{Cast\_nat\_to\_int}(A:\natType)) \pmb{+} \pmb{Cast\_int\_to\_AExp}(\pmb{Cast\_nat\_to\_int}(\pmb{s}(B:\natType)))$.

In defining $tr_{term}$, for each pair of relation $(s',s)$ in $\leq$, we pick a well-formed constructor sequence $C_1,...,C_n$, such that the target sort of $C_1$ is $s$ and the input sort of $C_n$ is $s'$. The sequence defines the way of translating a sub-term $t$ having sort $s'$ to a sort $s$ by constructing $C_1 ( ...(C_n (t))...)$ in $tr_{term}$. Because of the \textit{core equality} relation, the choice does not affect the construction of the equivalence classes in $T_{(\Sigma^{\#},E^{\#})}$, and not affect the represented equivalence classes by a term in $T_{\Sigma^{\#}}(X)$. We have three theorems about the translation function $tr_{term}$ and term in $T_{(\Sigma^{\#},E^{\#})}$ and $T_{\Sigma^{\#}}(X)$.

\begin{theorem}

Let $\Sigma$ be an order-sorted signature, $T_{\Sigma}$ be the term algebra of it, $E$ be the equation set of the order-sorted algebra, $T_{\Sigma,E}$ be the terms $T_{\Sigma}$ modulo equations $E$, $T_{\Sigma}(X)$ be the terms with variables in the order-sorted algebra, $\Sigma^{\#}$ be the translated many-sorted algebra of signature $\Sigma$, $T_{\Sigma^{\#}}$, $E^{\#}$, $T_{(\Sigma^{\#},E^{\#})}$ and $T_{\Sigma^{\#}}(X)$ are the corresponding translations of items in the order-sorted algebra, and $tr_{term}$ be the translation function of terms.

(1) If a term $t$ has least sort $s$ in $\Sigma$, then its translation $t'$ has the target sort $s$. 

(2) For a term $t$ in  $T_{\Sigma,E}$, for any two term translation functions $tr_{term}$ and $tr'_{term}$ having difference in picking different sequences of constructors for pairs in $\leq$, if $c \in T_{(\Sigma^{\#},E^{\#})}$ and $tr_{term}(t) \in c$, then $tr'_{term}(t) \in c$.

(3) For a term $t(X)$ in $T_{\Sigma}(X)$, for two translation functions $tr_{term}$ and $tr'_{term}$ having difference in picking different sequences of constructors for pairs in $\leq$, we have two terms $tr_{term}(t(X))$ and $tr'_{term}(t(X))$, for any substitution $h$ mapping $X$ to $T_\Sigma^{\#}$ such that $t$ and $t'$ are the result of replacing each variable $x$ in $tr_{term}(t(X))$ and $tr'_{term}(t(X))$ by $h(x)$, if $c \in T_{(\Sigma^{\#},E^{\#})}$ and $t \in c$, then $t' \in c$. 

\end{theorem}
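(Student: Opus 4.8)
The plan is to handle part~(1) on its own and to derive parts~(2) and~(3) from a single inductive lemma saying that $tr_{term}$ is insensitive, modulo $E^{\#}$, to the choice of cast sequences.

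\emph{Part~(1).} I would induct on the structure of $t$. If $t$ is a constant $a$, strong sensibility forbids overloaded constants, so $a$ has a unique declaration $a:\mathit{nil}\rightarrow s$; thus $s$ is its least sort, $tr_{term}$ leaves $a$ unchanged, and the target sort is $s$. If $t=f(t_1,\ldots,t_m)$, let $p_i$ be the least sort of $t_i$, so $tr_{term}(t_i)$ has target sort $p_i$ by the induction hypothesis. Well-sortedness of $t$ supplies a declaration $g\colon w_1\times\cdots\times w_m\rightarrow u$ with constructor $f$ and $p_i\le w_i$ for all $i$. Any declaration of $f$ applicable to $(t_1,\ldots,t_m)$ has, in each position, an argument sort lying in the connected component of $p_i$ (since it is $\ge p_i$), hence shares a common supersort with $w_i$ by the unique-top-supersort hypothesis; so all applicable declarations are argument compatible, and strong sensibility forces them to share the target sort $u$, making $u$ the least sort $s$ of $t$. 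Let $f^{*}\colon s_1\times\cdots\times s_m\rightarrow s^{*}$ be the maximal argument-bounding operator of that argument-compatibility class; it is the representative $tr'_{\Sigma}$ retains and that $tr_{term}$ emits here. Maximal argument-bounding instantiated with $g$ gives $w_i\le s_i$, so $p_i\le s_i$, and the cast sequence $C_1(\ldots(C_n(tr_{term}(t_i)))\ldots)$ inserted at position~$i$ is well-formed with target sort $s_i$. Hence $tr_{term}(t)=f^{*}(\ldots)$ has target sort $s^{*}$, and since $f^{*}$ is argument compatible with $g$, strong sensibility yields $s^{*}=u=s$.

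\emph{A lemma for parts~(2) and~(3).} I would then prove, by structural induction on $t(X)$, that for every $t(X)\in T_{\Sigma}(X)$ and any two translation functions $tr_{term}$ and $tr'_{term}$ differing only in the constructor sequences they assign to the pairs of $\le$, one has $tr_{term}(t(X))=_{E^{\#}}tr'_{term}(t(X))$ (writing $=_{E^{\#}}$ for provable equality modulo $E^{\#}$). At a node $f(t_1,\ldots,t_m)$ both translations use the same representative $f^{*}$, since the signature translation $tr'_{\Sigma}$ does not depend on the cast-sequence choice, and by the induction hypothesis $tr_{term}(t_i)=_{E^{\#}}tr'_{term}(t_i)$; in particular these share a target sort $p_i$, as equations of $E^{\#}$ relate only terms of equal sort. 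Both translations then coerce from $p_i$ up to the argument sort $s_i$ of $f^{*}$ along a directed path from $p_i$ to $s_i$ in the graph of $O$. If that path is unique the two sequences coincide; otherwise core equality contains exactly the equation identifying the two sequences applied to a variable of sort $p_i$, and instantiating it at $tr'_{term}(t_i)$, together with congruence applied to $tr_{term}(t_i)=_{E^{\#}}tr'_{term}(t_i)$, shows the two coerced subterms are equal modulo $E^{\#}$. Closing $f^{*}$ over these equalities (using once more that $T_{(\Sigma^{\#},E^{\#})}$ is a congruence, i.e.\ applications of constructors respect $=_{E^{\#}}$) gives the claim.

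\emph{Parts~(2) and~(3), and the main obstacle.} Part~(2) is the ground case of the lemma: $tr_{term}(t)=_{E^{\#}}tr'_{term}(t)$ puts both terms in the same equivalence class, so $tr_{term}(t)\in c$ forces $tr'_{term}(t)\in c$. For part~(3), since provable equality modulo $E^{\#}$ is stable under substitution, $tr_{term}(t(X))=_{E^{\#}}tr'_{term}(t(X))$ yields $t=_{E^{\#}}t'$ for the $h$-instances $t,t'$, hence they lie in the same class $c$. I expect part~(1) to be the delicate step: one must show the argument-compatibility class applicable to a tuple of subterms is unique (this is precisely where the unique-top-supersort hypothesis is needed), that the retained representative's target sort equals the least sort, and that the inserted casts land in exactly that representative's argument sorts; parts~(2) and~(3) are then routine, given that core equality is already known to be an equivalence and a congruence.
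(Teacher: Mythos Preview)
Your proposal is correct and broadly aligned with the paper's strategy, though the details differ in two places worth noting.

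For part~(1), the paper simply declares the claim ``trivial'' as a consequence of strict sensibility, without spelling out why the translated term lands in the least sort; your structural induction makes the argument explicit, in particular isolating the step (via the unique-top-supersort hypothesis) that all declarations of $f$ applicable to a given argument tuple are mutually argument compatible, which is what pins down the target sort. This buys you rigour the paper glosses over.

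For parts~(2) and~(3), the paper argues the ground case~(2) by inducting on the maximal length of the non-core cast sequences inserted at argument positions (essentially the longest path in $O$ needed at any position), and then says~(3) ``basically modifies'' that proof to allow variables. You instead prove a single lemma by structural induction on open terms and specialise to~(2) and~(3); this is a cleaner decomposition, and the induction variable (term structure) is more natural than the paper's (path length in the subsort graph). Both routes rest on the same observation: when two cast sequences bridge the same pair of sorts, core equality identifies them, and congruence propagates the identification through the outer constructor. The mathematical content is the same; your packaging is tidier.
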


\begin{proof}

Part (1) is trivial because after we require our operators to be \textit{strictly sensible}, so any term must have a unique least target sort in the original order-sorted algebra and the target sort is also the target sort of the translated term in $T_{\Sigma^{\#}}$ without converting it to other supersort $s'$ by adding non-core constructors on top of it.
\ignore{To show (2), if $s' \leq s$, then there is a sequence $C_1,...,C_n$ being well-formed, the target sort of $C_1$ is $s$ and the input sort of $C_n$ is $s'$. For every term $t$ in $T_{\Sigma^{\#},s'}$, we have a term $C_1(...(C_n (t))$ in $T_{\Sigma^{\#},s}$. For any two different subsorts of $s$ as $s'$ and $s''$, the terms in $T_{\Sigma^{\#},s'}$ and $T_{\Sigma^{\#},s''}$ have no overlapping terms, because every operator in an order-sorted algebra has a unique least target sort and determining the least target sort of the top-most operator of a term is how we construct $T_{\Sigma^{\#},s}$ for all $s \in S$. Hence, for the subsorts $s_1,...,s_n,...$ of $s$, $T_{\Sigma^{\#},s_1},...,T_{\Sigma^{\#},s_n},...$ map one-to-one to the term set $T_{\Sigma^{\#},s}$.}

To show (2), if for a term $t$ having sort $s'$, and the translation functions $tr_{term}$ and $tr'_{term}$ cast it into a term in sort $s$ without the need of translating the subterms of $t$, then the two resulting terms $tr_{term}(t)$ and $tr'_{term}(t')$ are trivially in the same equivalence class based on the definition of \textit{core equality}. The sorts $s'$ and $s$ must be the same because the order-sorted definition in this paper requires sort equivalence in two sides of an equation.

If there is a term $t$ having a subterm $f(t_1,...,t_n)$, if the position $i$ of the list $t_1,...,t_n$ has target sort $s$ according to the signature, the term $t_i$ has sort $s'$ and $s' \leq s$, then a given translation function generates well-formed non-core constructor sequences having the form $C_1,...,C_n$ to translate the term $t_i$. We refer to the number of the non-core constructors in this sequence as $n$, which is the same as one of the distances between $s'$ and $s$ in $O$. We induct on maximal numbers of the non-core constructors in each argument position in a term $t$. If the maximal number of argument non-core constructors is zero, it means that $tr_{term}$ and $tr'_{term}$ do not translate the direct subterms of $f(t_1,...,t_n)$, so any translations on $f(t_1,...,t_n)$ to a target sort $s''$ generate terms in the same equivalence class. Assuming that when the maximal numbers of non-core constructors are less than $k$, $tr_{term}(f(tr_{term}(t_1),...,tr_{term}(t_n)))$ and $tr'_{term}(f(tr'_{term}(t_1),...,tr'_{term}(t_n)))$ 
generate terms in the same equivalence class; if the position $i$ in $f(t_1,...,t_n)$ has sort $s$, the term $t_i$ has sort $s'$, $s' \leq s$ and the maximal distance between $s'$ and $s$ is $k+1$, if there is only one path from $s$ to reach $s'$, then $tr_{term}(t_i)$ must be the same as $tr'_{term}(t_i)$ since we generate only one non-core constructor for each pair in $O$. If there are at least two paths, without losing generality, assuming that $tr_{term}$ has the longest path, $tr_{term}$ picks the well-formed sequence $C_1,...,C_{k+1}$ to translate $t_i$ to a term having sort $s$ and $tr'_{term}$ picks the well-formed sequence $K_1,...,K_m$ to translate $t_i$ to a term having sort $s$, where $m \leq k+1$. Based on the definition of \textit{core equality}, $C_1 (...(C_{k+1} (t_i))...)=_{core} K_1 (...(K_m (t_i))...)$, hence, any argument $t_i$ of $f(t_1,...,t_n)$ is translated by $tr_{term}$ and $tr'_{term}$ into terms in the same equivalence class and $f(t_1,...,t_n)$ are also translated by $tr_{term}$ and $tr'_{term}$ into terms in the same equivalence class.

To show (3), the proof basically modifies the proof of part (2) to allow variables in the term and by any substitution on the same variables in two terms $t$ and $t'$ that are generated by $tr_{term}$ and $tr'_{term}$ are in the same equivalence class.

\end{proof}

\paragraph{\textsf{Translating Semantic Rules.}} Translating semantic rules $R$ to $R^{\#}$ is very straight forward and similar to the one in translating equations in $E$ to $E^{\#}$. For each pair $(t(X), t'(X))$ in $R$, the first step is to apply the term translation on $t(X)$ and $t'(X)$, to be terms in $T_{\Sigma^{\#}}(X)$. Then, we add one rule $(tr_{term}(t(X)),tr_{term}(t'(X)))$ to $R^{\#}$. Since we assume that all order-sorted algebra are sort decreasing, the right hand side of a rule might have a top-most target sort being a subsort of the left hand side of the rule. After they are translated into many-sorted algebras, the two sides of a rule must have the same top-most target sort. We solve this problem by casting the right hand side of a rule to have the top-most sort equal to the left hand side. For example, in translating the rule $\pmb{-} 0 \Rightarrow 0$ in the order-sorted algebra in Figures~\ref{fig:imp-K} and~\ref{fig:imp-rules}, we make a new rule $\pmb{-} 0 \Rightarrow \pmb{Cast\_nat\_to\_int}(0)$ in the translated many-sorted algebra. 

For a given order-sorted algebra $A$ as $(S,O,\Phi,\Sigma,E,R)$, our translation produces the many-sorted algebra $(S,tr_{\Sigma}(\Sigma),tr_E(E),tr_R(R))$. We show that our many-sorted algebra maintains a bi-simulation relation as the original order-sorted algebra. The bi-simulation proof is based on structural inductions on the signature $(S,O,\Phi,\Sigma)$ and $(S,tr_{\Sigma}(\Sigma))$.

\begin{theorem}[Bi-simulation between $A$ and $tr(A)$]
Let $(S,tr_{\Sigma}(\Sigma),tr_E(E),tr_R(R))$ be the translated many-sorted algebra of a given order-sorted algebra $(S,O,\Phi,\Sigma,E,R)$, 
For any $r$ in $R$ and term $t$ in $T_{(\Sigma,E)}$, if $t \longrightarrow_{r} t'$, then we have $tr_{\Sigma}(t) \longrightarrow_{tr_R(r)} tr_{\Sigma}(t')$. For any $p$ in $T_{(\Sigma^{\#},E^{\#})}$, if $p \longrightarrow_{tr_R(r)} p'$, then there are terms $t$ and $t'$ such that $p=tr_{\Sigma}(t)$, $p'=tr_{\Sigma}(t')$ and $t \longrightarrow_{r} t'$.
\end{theorem}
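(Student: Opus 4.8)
The plan is to prove the two inclusions separately, after isolating two bookkeeping lemmas about the term translation; throughout, a term is translated \emph{relative to} the sort at which it occurs (legitimate by clause~(3) of Definition~\ref{def0.1}, under which an order-sorted term lies in $T_{\Sigma,s}$ for every $s$ above its least sort) and the cast that $tr_R$ prepends to a rule's right-hand side is exactly the one that restores the sort drop permitted by sort-decreasingness (Definition~\ref{def2}). The first ingredient is a \emph{substitution lemma}: for $t(X)\in T_{\Sigma}(X)$ and a substitution $h:X\to T_{\Sigma}$, the term obtained from $tr_{term}(t(X))$ by replacing each $x$ by $tr_{term}(h(x))$ is $=_{E^{\#}}$-equal to $tr_{term}(t(X)[h])$; this goes by structural induction on $t(X)$, using Part~(1) of the preceding theorem on $tr_{term}$ (least sorts, hence the non-core casts $tr_{term}$ inserts around an argument, are preserved and do not depend on whether one translates then substitutes or substitutes then translates) and Parts~(2)--(3) of that theorem (the freedom in the choice of cast sequences is absorbed into $=_{E^{\#}}$). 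The second is a \emph{context lemma}: $=_{E^{\#}}$ is a congruence on $T_{(\Sigma^{\#},E^{\#})}$, so $p=_{E^{\#}}q$ gives $C[p]=_{E^{\#}}C[q]$; and, by the definition of core equality together with Part~(3), if a one-hole context $D$ expects a sort above $s$ and a subterm $u$ has least sort $s''\le s$, then wrapping $tr_{\Sigma}(u)$ in \emph{any} well-formed non-core cast sequence from $s''$ up to the hole's sort always names the same $E^{\#}$-class.

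For the first inclusion, suppose $t\longrightarrow_{r}t'$ with $r=(\ell(X),\rho(X))$. By the definition of rewriting there is a representative of the class $t$, a position $\pi$, and a substitution $h$ with $t|_{\pi}=_{E}\ell(X)[h]$ and $t'=_{E}t[\pi\leftarrow\rho(X)[h]]$. An outer induction on $t$ reduces to the case $\pi$ at the root (the inductive step merely pushes the claim through the core operator at the root, which $tr_{term}$ leaves in place). In the root case the substitution lemma identifies $tr_{\Sigma}(t)$, modulo $E^{\#}$, with $tr_{term}(\ell(X))[\,tr_{term}\circ h\,]$, so the left-hand side of $tr_R(r)$ matches; rewriting with $tr_R(r)$ and a second application of the substitution lemma give $\mathit{cast}(tr_{term}(\rho(X)[h]))=_{E^{\#}}\mathit{cast}(tr_{\Sigma}(t'|_{\pi}))$, and the context lemma collapses this --- together with the enclosing context's own cast --- onto the class $tr_{\Sigma}(t')$. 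Since $tr_R(r)\in tr_R(R)$ and rewriting in a many-sorted algebra acts on $E^{\#}$-classes (Definition~\ref{def1}), this is exactly $tr_{\Sigma}(t)\longrightarrow_{tr_R(r)}tr_{\Sigma}(t')$.

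For the converse I would introduce the \emph{untranslation} $\tau:T_{\Sigma^{\#}}\to T_{\Sigma}$ that deletes every non-core unary constructor and keeps the core operators, and establish: $\tau$ respects $E^{\#}$ (the only equations of $E^{\#}$ beyond $tr_E(E)$ are core equalities, each of which $\tau$ erases to an identity); modulo $E^{\#}$ the relevant $p$ lies in the image of $tr_{\Sigma}$ (its cast sequences trace paths in the $\le$-graph, since the non-core constructors are in bijection with the pairs of $O$, and by the context lemma the particular paths chosen are immaterial); and $\tau\circ tr_{\Sigma}$ and $tr_{\Sigma}\circ\tau$ are identities modulo $E$ and $E^{\#}$ respectively. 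Then, given $p\longrightarrow_{tr_R(r)}p'$ at position $\pi$ via substitution $g$, I would put $t:=\tau(p)$ and $t':=t[\pi'\leftarrow\rho(X)[\tau\circ g]]$; the substitution lemma turns $p|_{\pi}=_{E^{\#}}tr_{term}(\ell(X))[g]$ into $t|_{\pi'}=_{E}\ell(X)[\tau\circ g]$, so $t\longrightarrow_{r}t'$, while the round-trip properties give $p=_{E^{\#}}tr_{\Sigma}(t)$ and $p'=_{E^{\#}}tr_{\Sigma}(t')$ --- the latter because $p'$ arises from $p$ by the same local replacement that defines $t'$ from $t$, the extra right-hand-side cast of $tr_R(r)$ being reconciled with the casts of $tr_{\Sigma}(t')$ by core equality exactly as in the first direction.

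The hard part --- and essentially the only place the \emph{strictly sensible} hypothesis is used beyond the earlier selection of a representative for each class of overloaded operators --- is the cast bookkeeping concentrated in the context lemma and in the round-trip properties of $\tau$: a rewrite step may lower the least sort of a subterm, so the cast sequence that $tr_{\Sigma}$ places around the contractum to fit its surrounding context is \emph{not} literally the concatenation of the rule's right-hand-side cast with the original context cast, and one must appeal to core equality and Part~(3) of the $tr_{term}$ theorem to see that these competing cast choices all name one and the same $E^{\#}$-class. Once $tr_{\Sigma}$ and $\tau$ are shown to be mutually inverse up to $E$ and $E^{\#}$ and compatible with subterm replacement, both directions of the bi-simulation follow formally from the substitution and context lemmas.
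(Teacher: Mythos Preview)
Your proposal is correct and covers both directions with more care than the paper itself does. The paper organises the argument into three parts: (i) any two translations of the same order-sorted term are core-equal (structural induction on the term); (ii) $E$-classes and $E^{\#}$-classes correspond bijectively (reduced to part~(i) plus the observation that translated equations still prove the same equalities); (iii) the rule step, handled by a short context argument that checks $tr_{\Sigma}(C[tr_{\Sigma}(t')])$ is well-formed whenever $C[t]$ is, because the hole already accommodates the larger sort $s$ and the translated right-hand side has been cast back up to $s$. The paper explicitly works out only one direction of each part and never names an inverse to $tr_{\Sigma}$.

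Your decomposition is genuinely different in packaging. You isolate a substitution lemma (translation commutes with instantiation modulo $E^{\#}$) that the paper leaves implicit, and for the converse you introduce an explicit untranslation $\tau$ with round-trip identities. This buys you a constructive preimage for the backward simulation and a clean place to localise the cast bookkeeping; the paper instead relies on its parts~(i)--(ii) to assert, without building $\tau$, that every $E^{\#}$-class arises from a unique $E$-class. Conversely, the paper's three-tier structure passes more directly from ``terms translate well'' to ``classes translate well'' to ``rewrites translate well'' without having to verify separately that $\tau$ respects each equation of $E^{\#}$. Both routes ultimately rest on the same facts---Parts~(1)--(3) of the preceding theorem and the definition of core equality---so neither exposes anything the other misses; yours is simply the more explicit of the two, particularly on the converse direction and on where the \emph{strictly sensible} hypothesis is actually consumed.
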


\begin{proof}

Since all \textit{argument compatible} operators are required to be \textit{strong sensible} and \textsf{maximal argument-bounding}, all \textit{argument compatible} operators with the same constructor should be translated into one specific operator in the many-sorted algebra. Hence, the proof of the bi-simulation relation can be divided into three parts. The first part is to show that for every term $t$ in $T_{(\Sigma)}$, after we translate it to $T_{(\Sigma^{\#})}$, it might have many instances $t_1,...,t_n$, but they are equivalent under \textit{core equality}, and vice versa. 

We only show one direction of the proof of the first part. We can structurally induct on term $t$. $t$ can be expressed as $f(t_1,...,t_n)$, after it is translated into a term in $T_{(\Sigma^{\#})}$, we inductively assume that all subterms of $t_1,...,t_n$ are in the same equivalence class under \textit{core equality}. For a term $t_i$, if there is a list of constructors $C_1,...,C_q$ and $K_1,...,K_m$
such that term $t_i$ can be translated into $C_1 (...(C_q (t_i))...)$ and $K_1 (...(K_m (t_i))...)$, then $C_1 (...(C_q (t_i))...)$ and $K_1 (...(K_m (t_i))...)$ are equivalent under \textit{core equality}, because the translation function translates a term $t$ to terms in $T_{(\Sigma^{\#})}$ by adding non-core constructions which are corresponding to the subsort relations in set $O$, and if it is translated into two different terms, then there are two paths from the target sort of $C_1$ to the argument sort of $C_q$ (the target sort of $C_1$ must be the same as the target sort of $K_1$ and the argument sort of $C_q$ must be the same as the argument sort of $K_m$), and the \textit{core equality} equations should equate these two terms according to its definition. 

The second part is to show that every class $c$ in $T_{(\Sigma,E)}$, when the terms of $c$ are translated to terms in 
$T_{(\Sigma^{\#},E^{\#})}$, these terms are still in the same equivalence classes, and for every class $c'$ $T_{(\Sigma^{\#},E^{\#})}$, and for all terms $t'$ in $c'$, all the terms $t$ that satisfy the relation $tr_{term}(t) = t'$ where $t' \in c'$, are in a unique class $c$ in $T_{(\Sigma,E)}$.

In the proof of the second part, we only show one direction. For a class $c$ in $T_{(\Sigma,E)}$, all its terms are $t_1,...,t_i,...$, for all these terms, when they are translated into terms in $T_{(\Sigma^{\#})}$, we know that one term $t_i$ might be translated into different terms $s_1,...,s_n$, but all these terms are equivalent under \textit{core equality}.
In addition, when we translate equations, we only translate the two sides of equations,
 which are terms in $T_{(\Sigma^{\#})}(X)$, and the translated two sides have have different representations but they are all equivalent under \textit{core equality}. 
So for any two terms $t_i$ and $t_k$ in $c$, 
when they are translated into $u_1,...,u_n$ and $v_1,...,v_m$ in $T_{(\Sigma^{\#})}$;
first, $u_1,...,u_n$ and $v_1,...,v_m$ are equivalent under \textit{core equality}, respectively. 
Second, for any two terms $u_a$ and $u_b$ where a $\in [1,n]$ and $b \in [1,m]$, these two terms can be proved to be equivalent through set $E^{\#}$, so they are in the same equivalence class in $T_{(\Sigma^{\#},E^{\#})}$.

The third part is to show that for any $r$ in $R$ and term $t$ in $T_{(\Sigma,E)}$, if $t \longrightarrow_{r} t'$, then we have $tr_{\Sigma}(t) \longrightarrow_{tr_R(r)} tr_{\Sigma}(t')$. For any $p$ in $T_{(\Sigma^{\#},E^{\#})}$, if $p \longrightarrow_{tr_R(r)} p'$, then there are terms $t$ and $t'$ such that $p=tr_{\Sigma}(t)$, $p'=tr_{\Sigma}(t')$ and $t \longrightarrow_{r} t'$.

We also only show one direction here. For any $r$ in $R$, it can be expressed as $(t_1,t_2)$, where $t_1$ and $t_2$ are in $T_{(\Sigma)}(X)$, there are two situations. First, if the target sorts of $t_1$ and $t_2$ are the same, then the argument will be the same as the equation proof in the second part above. If the target sort $s'$ of $t_2$ is a subsort of $s$ of $t_1$, we need to show that for any context $C[]$, and for any term $t \longrightarrow_{r} t'$, and 
$tr_{\Sigma}(t) \longrightarrow_{tr_R(r)} tr_{\Sigma}(t')$, we have $C[t']$ and $tr_{\Sigma}(C[tr_{\Sigma}(t')])$ to be both valid (well-formed) terms. The notation $tr_{\Sigma}(C[])$ means that we have a way to translate the context $C[]$ such that if we put a redex $a$ of $T_{(\Sigma^{\#})}$ in the context, the whole expression is valid in $T_{(\Sigma^{\#})}$.
Recall that we have the condition that $C[t]$ must be a valid term. Hence, the hole in the context $C[]$ must at least be able to hold a term with target sort $s$, and the target sort of $tr_{\Sigma}(t')$ is also $s$, then $tr_{\Sigma}(C[tr_{\Sigma}(t')])$ is also a valid term as long as $C[t]$ is a valid term. 

\end{proof}









\section{Related Work}
\label{Related Work}

The idea of order-sorted algebras was first systematically introduced into the programming language field by Goguen \textsf{et al.} \cite{Goguen:1985:OSO:646239.683375}. The main contribution of the work is to introduce subtyping relations for the syntactic constructs so that operators do not only belong to one sort, but also act as constructs in supersort of the defined sort. In addition, it defines a general operational semantic model for order-sorted algebras. Many people tried to define rewriting strategies, unifications and equational rules on top of order-sorted algebras and further extended the operational semantics of order-sorted algebras \cite{ Alpuente:2014:MOE:2608865.2609198,Comon1990, Goguen:1992:OAI:146982.146984,Kirchner1988,Meseguer:1989:OU:75739.75743}. Stell \cite{Stell:2002:FOA:646061.676166} tried to introduce a general framework to contain all existing order-sorted algebra semantics in his work. In the paper of Goguen \textsf{et al.} \cite{Goguen:1985:OSO:646239.683375}, they introduce a way of translating initial free (algebras that have no equations and rules) order-sorted algebras to many-sorted ones. Their way of translation is similar to our work by adding non-core constructors. However, their work is solely on dealing with initial free algebras without mentioning how to translate a general order-sorted algebra to a many-sorted one because the purpose of their translation is to translate their order-sorted logic into a first order logic in a many-sorted world, so that they can show their order-sorted logic is decidable. Obviously, they also do not need to investigate a bi-simulation relation between their order-sorted algebras and the translated many-sorted ones.

Based on the order-sorted algebras, Meseguer \textsf{et al.} \cite{MARTIOLIET2002121,jose111} developed rewriting logic. The biggest contribution of rewriting logic is to contain the operational semantics of order-sorted algebras and distinguish equations and rewriting rules so that equations partition the terms into equivalence classes while rewriting rules act like traditional transition rules in structural operational semantics. Based on rewriting logic, Maude \cite{clavel00principles} implements the syntax and semantics of rewriting logic and provides several useful tools and applications \cite{Mspecverrwl2003, eker-etal-02wrla, EMSltl2002}. Other implementation of order-sorted algebras include PROTOS(L) \cite{BEIERLE1994123} which has an operational semantics based on polymorphic order-sorted resolution. \K \cite{rosu-serbanuta-2010-jlap} is a framework based on order-sorted algebras, which provides language developers a convenient way to write language specifications. A lot of specifications have specified in \K, including the semantics of Java \cite{bogdanas-rosu-2015-popl}, Javascript \cite{park-stefanescu-rosu-2015-pldi},
 PHP \cite{Filaretti2014}, C \cite{ellison-rosu-2012-popl,hathhorn-ellison-rosu-2015-pldi} and LLVM \cite{llvmsemantics}.

On the other hand, the study and exploration of many sorted algebra has a long history. Its logic system has been explored by Wang \cite{wang1952}. Many well-known programming languages such as C, Java, LLVM and Python are based on many-sorted algebras. One of the most prominent and mathematical of
programming language specifications, Standard ML by Milner, Tofte, Harper,
and Macqueen \cite{Milner:1997:DSM:549659} is based on many sorted algebras. The simple type systems of the two famous theorem provers: Isabelle/HOL \cite{isabelle} and Coq \cite{Corbineau2008} are also based on them, which motivates us to provide a translation from order-sorted algebras into many-sorted ones.

As far as we know, the most recent attempt of translating order-sorted algebras into many-sorted ones is given by Meseguer and Skeirik \cite{Meseguer2017}. The purpose of the paper is to prove the decidability of the order-sorted logic defined in their paper by translating it to a many-sorted world, so their translation still focuses initial free order-sorted algebras (algebras that have no equations and rules), and only provide a naive translation to translate order-sorted equations and rules to many-sorted ones.
In their translation, by adding possible more sorts, they calculate the least sorts of constructs and put them under corresponding sorts to create the signature of a many-sorted algebra. For any given rule, they add more rules if variables of the rule have subsorts in the original order-sorted algebra. They need to add one more rule for each subsort of a variable in a rule.

For example, in dealing with the order-sorted algebra in Figures~\ref{fig:imp-K} and~\ref{fig:imp-rules}, to translate the equation $A:\AExp \pmb{+} B:\AExp = B:\AExp \pmb{+} A:\AExp$, they generate three different equations: $A:\natType \pmb{+} B:\natType = B:\natType \pmb{+} A:\natType$, $A:\intType \pmb{+} B:\intType = B:\intType \pmb{+} A:\intType$ and $A:\AExp \pmb{+} B:\AExp = B:\AExp \pmb{+} A:\AExp$. The original rule involves only one sort $\AExp$. if there is a rule involving $\AExp$, $\BExp$ and $\Stmt$, which all have subsorts, then the algorithm generates sixteen different equations in the translated many-sorted algebra. In fact, if there is a rule or an equation involving $n$ variables having different sorts and each of them have $m$ different subsorts, the algorithm generates $m^n$ different rules or equations in the translated many-sorted algebra. 
 On the other hand, our translation does not change their sorts. We view subsort relations as implicit coercions, while our translation makes them into explicit ones by inserting a constructor for each relation and making the relation into a unary operator in the given order-sorted algebra. We insert a new equational rule named \textit{core equality} to introduce new partitions on the equivalence classes of the terms allowed in the algebra. Because of these features, our translation is of similar size of equations and rewrite rules in the translated many-sorted algebra (adding no more than $|S|^2$ new equations) and gives users a simpler final description of the language specifications. 

\section{Conclusion.}

In this paper, we propose an algorithm to translate an order-sorted algebra into a many-sorted one in a restricted domain by requiring the order-sorted algebra to be \textit{strictly sensible}. The key idea of the translation is to add an equivalence relation called \textit{core equality} to the translated many-sorted algebras. By defining this relation, we reduce the complexity in translating a \textit{strictly sensible} order-sorted algebra to a many-sorted one, and increase the translated many-sorted algebra equations by a number less than $|S|^2$, which is the square of the size of the sort set and is a very small number compared to the number of equations and rules in an algebra. We also keep the number of rewrite rules in the algebra in the same amount. We then prove the order-sorted algebra and the translated many-sorted algebra to be bisimilar (Section~\ref{translation}). We also showed that \textit{core equality} is indeed an equivalence relation and other properties of our translated many-sorted algebras. Along with showing our algorithm and theories, an IMP language is introduced as an example of the algorithm. We believe that our translation facilitates transformations of order-sorted specifications in \K or Maude into many-sorted systems in Isabelle/HOL or Coq, which will empower users to prove theorems about large and popular language specifications.

This work is a important part of building a compilation relation between the \K framework and a functional programming language. We intend to build a transformation to translate specifications defined in \K to specifications defined in Isabelle automatically and correctly. The translated specifications should be human readable and user friendly because the ultimate goal of the project is to use the translated specifications to prove properties about programming languages.




\bibliographystyle{eptcs}


\begin{thebibliography}{10}
\providecommand{\bibitemdeclare}[2]{}
\providecommand{\surnamestart}{}
\providecommand{\surnameend}{}
\providecommand{\urlprefix}{Available at }
\providecommand{\url}[1]{\texttt{#1}}
\providecommand{\href}[2]{\texttt{#2}}
\providecommand{\urlalt}[2]{\href{#1}{#2}}
\providecommand{\doi}[1]{doi:\urlalt{http://dx.doi.org/#1}{#1}}
\providecommand{\bibinfo}[2]{#2}

\bibitemdeclare{article}{Alpuente:2014:MOE:2608865.2609198}
\bibitem{Alpuente:2014:MOE:2608865.2609198}
\bibinfo{author}{Mar\'{\i}a \surnamestart Alpuente\surnameend},
  \bibinfo{author}{Santiago \surnamestart Escobar\surnameend},
  \bibinfo{author}{Javier \surnamestart Espert\surnameend} \&
  \bibinfo{author}{Jos{\'e} \surnamestart Meseguer\surnameend}
  (\bibinfo{year}{2014}): \emph{\bibinfo{title}{A Modular Order-sorted Equational Generalization Algorithm}}.
\newblock {\sl \bibinfo{journal}{Inf. Comput.}} \bibinfo{volume}{235}, pp.
  \bibinfo{pages}{98--136}, \doi{10.1016/j.ic.2014.01.006}.

\bibitemdeclare{article}{BEIERLE1994123}
\bibitem{BEIERLE1994123}
\bibinfo{author}{Christoph \surnamestart Beierle\surnameend} \&
  \bibinfo{author}{Gregor \surnamestart Meyer\surnameend}
  (\bibinfo{year}{1994}): \emph{\bibinfo{title}{Run-time type computations in
  the Warren Abstract machine}}.
\newblock {\sl \bibinfo{journal}{The Journal of Logic Programming}}
  \bibinfo{volume}{18}(\bibinfo{number}{2}), pp. \bibinfo{pages}{123 -- 148},
  \doi{10.1016/0743-1066(94)90049-3}.
\newblock
  \urlprefix\url{http://www.sciencedirect.com/science/article/pii/0743106694900493}.

\bibitemdeclare{inproceedings}{bogdanas-rosu-2015-popl}
\bibitem{bogdanas-rosu-2015-popl}
\bibinfo{author}{Denis \surnamestart Bogd\u{a}na\c{s}\surnameend} \&
  \bibinfo{author}{Grigore \surnamestart Ro\c{s}u\surnameend}
  (\bibinfo{year}{2015}): \emph{\bibinfo{title}{{K-Java: A Complete Semantics
  of Java}}}.
\newblock In: {\sl \bibinfo{booktitle}{Proceedings of the 42nd Symposium on
  Principles of Programming Languages (POPL'15)}}, \bibinfo{publisher}{ACM},
  pp. \bibinfo{pages}{445--456},
  \doi{10.1145/2676726.2676982}.

\bibitemdeclare{inproceedings}{clavel00principles}
\bibitem{clavel00principles}
\bibinfo{author}{Manuel \surnamestart Clavel\surnameend},
  \bibinfo{author}{Steven \surnamestart Eker\surnameend},
  \bibinfo{author}{Patrick \surnamestart Lincoln\surnameend} \&
  \bibinfo{author}{Jos{\'e} \surnamestart Meseguer\surnameend}
  (\bibinfo{year}{1996}): \emph{\bibinfo{title}{Principles of Maude}}.
\newblock In \bibinfo{editor}{J.~\surnamestart Meseguer\surnameend}, editor:
  {\sl \bibinfo{booktitle}{Electronic Notes in Theoretical Computer Science}},
  \bibinfo{volume}{4}, \bibinfo{publisher}{Elsevier Science Publishers},
  \doi{10.1016/S1571-0661(04)00034-9}.

\bibitemdeclare{inbook}{Comon1990}
\bibitem{Comon1990}
\bibinfo{author}{Hubert \surnamestart Comon\surnameend} (\bibinfo{year}{1990}):
  \emph{\bibinfo{title}{Equational formulas in order-sorted algebras}}, pp.
  \bibinfo{pages}{674--688}.
\newblock \bibinfo{publisher}{Springer Berlin Heidelberg},
  \bibinfo{address}{Berlin, Heidelberg}, \doi{10.1007/BFb0032066}.

\bibitemdeclare{inbook}{Corbineau2008}
\bibitem{Corbineau2008}
\bibinfo{author}{Pierre \surnamestart Corbineau\surnameend}
  (\bibinfo{year}{2008}): \emph{\bibinfo{title}{A Declarative Language for the
  Coq Proof Assistant}}, pp. \bibinfo{pages}{69--84}.
\newblock \bibinfo{publisher}{Springer Berlin Heidelberg},
  \bibinfo{address}{Berlin, Heidelberg}, \doi{10.1007/978-3-540-68103-8_5}.

\bibitemdeclare{inproceedings}{eker-etal-02wrla}
\bibitem{eker-etal-02wrla}
\bibinfo{author}{Steven \surnamestart Eker\surnameend},
  \bibinfo{author}{Merrill \surnamestart Knapp\surnameend},
  \bibinfo{author}{Keith \surnamestart Laderoute\surnameend},
  \bibinfo{author}{Patrick \surnamestart Lincoln\surnameend} \&
  \bibinfo{author}{Carolyn \surnamestart Talcott\surnameend}
  (\bibinfo{year}{2002}): \emph{\bibinfo{title}{Pathway Logic: Executable
  Models of Biological Networks}}.
\newblock In: {\sl \bibinfo{booktitle}{Fourth International Workshop on
  Rewriting Logic and Its Applications (WRLA 2002), Pisa, Italy, September 19
  --- 21, 2002}}, {\sl \bibinfo{series}{Electronic Notes in Theoretical
  Computer Science}}~\bibinfo{volume}{71}, \bibinfo{publisher}{Elsevier},
  \doi{10.1016/S1571-0661(05)82533-2}.

\bibitemdeclare{inproceedings}{EMSltl2002}
\bibitem{EMSltl2002}
\bibinfo{author}{Steven \surnamestart Eker\surnameend},
  \bibinfo{author}{Jos{\'e} \surnamestart Meseguer\surnameend} \&
  \bibinfo{author}{Ambarish \surnamestart Sridharanarayanan\surnameend}
  (\bibinfo{year}{2003}): \emph{\bibinfo{title}{The Maude LTL Model Checker and
  Its Implementation}}.
\newblock In: {\sl \bibinfo{booktitle}{Proceedings of the 10th International
  Conference on Model Checking Software}}, \bibinfo{series}{SPIN'03},
  \bibinfo{publisher}{Springer-Verlag}, \bibinfo{address}{Berlin, Heidelberg},
  pp. \bibinfo{pages}{230--234},
  \doi{10.1007/3-540-44829-2_16}.

\bibitemdeclare{inproceedings}{ellison-rosu-2012-popl}
\bibitem{ellison-rosu-2012-popl}
\bibinfo{author}{Chucky \surnamestart Ellison\surnameend} \&
  \bibinfo{author}{Grigore \surnamestart Rosu\surnameend}
  (\bibinfo{year}{2012}): \emph{\bibinfo{title}{An Executable Formal Semantics
  of C with Applications}}.
\newblock In: {\sl \bibinfo{booktitle}{Proceedings of the 39th ACM
  SIGPLAN-SIGACT Symposium on Principles of Programming Languages (POPL'12)}},
  \bibinfo{publisher}{ACM}, pp. \bibinfo{pages}{533--544},
  \doi{10.1145/2103656.2103719}.

\bibitemdeclare{inbook}{Filaretti2014}
\bibitem{Filaretti2014}
\bibinfo{author}{Daniele \surnamestart Filaretti\surnameend} \&
  \bibinfo{author}{Sergio \surnamestart Maffeis\surnameend}
  (\bibinfo{year}{2014}): \emph{\bibinfo{title}{An Executable Formal Semantics
  of PHP}}, pp. \bibinfo{pages}{567--592}.
\newblock \bibinfo{publisher}{Springer Berlin Heidelberg},
  \bibinfo{address}{Berlin, Heidelberg}, \doi{10.1007/978-3-662-44202-9_23}.

\bibitemdeclare{inproceedings}{Goguen:1985:OSO:646239.683375}
\bibitem{Goguen:1985:OSO:646239.683375}
\bibinfo{author}{Joseph~A. \surnamestart Goguen\surnameend},
  \bibinfo{author}{Jean-Pierre \surnamestart Jouannaud\surnameend} \&
  \bibinfo{author}{Jos{\'e} \surnamestart Meseguer\surnameend}
  (\bibinfo{year}{1985}): \emph{\bibinfo{title}{Operational Semantics for
  Order-Sorted Algebra}}.
\newblock In: {\sl \bibinfo{booktitle}{Proceedings of the 12th Colloquium on
  Automata, Languages and Programming}}, \bibinfo{publisher}{Springer-Verlag},
  \bibinfo{address}{London, UK, UK}, pp. \bibinfo{pages}{221--231},
  \doi{10.1007/BFb0015747}.

\bibitemdeclare{article}{Goguen:1992:OAI:146982.146984}
\bibitem{Goguen:1992:OAI:146982.146984}
\bibinfo{author}{Joseph~A. \surnamestart Goguen\surnameend} \&
  \bibinfo{author}{Jos{\'e} \surnamestart Meseguer\surnameend}
  (\bibinfo{year}{1992}): \emph{\bibinfo{title}{Order-sorted Algebra I:
  Equational Deduction for Multiple Inheritance, Overloading, Exceptions and
  Partial Operations}}.
\newblock {\sl \bibinfo{journal}{Theor. Comput. Sci.}}
  \bibinfo{volume}{105}(\bibinfo{number}{2}), pp. \bibinfo{pages}{217--273},
  \doi{10.1016/0304-3975(92)90302-V}.

\bibitemdeclare{inproceedings}{hathhorn-ellison-rosu-2015-pldi}
\bibitem{hathhorn-ellison-rosu-2015-pldi}
\bibinfo{author}{Chris \surnamestart Hathhorn\surnameend},
  \bibinfo{author}{Chucky \surnamestart Ellison\surnameend} \&
  \bibinfo{author}{Grigore \surnamestart Ro\c{s}u\surnameend}
  (\bibinfo{year}{2015}): \emph{\bibinfo{title}{Defining the Undefinedness of
  C}}.
\newblock In: {\sl \bibinfo{booktitle}{Proceedings of the 36th ACM SIGPLAN
  Conference on Programming Language Design and Implementation (PLDI'15)}},
  \bibinfo{publisher}{ACM}, pp. \bibinfo{pages}{336--345},
  \doi{10.1145/2813885.2737979}.

\bibitemdeclare{inbook}{Kirchner1988}
\bibitem{Kirchner1988}
\bibinfo{author}{Claude \surnamestart Kirchner\surnameend},
  \bibinfo{author}{H{\'e}l{\`e}ne \surnamestart Kirchner\surnameend} \&
  \bibinfo{author}{Jos{\'e} \surnamestart Meseguer\surnameend}
  (\bibinfo{year}{1988}): \emph{\bibinfo{title}{Operational semantics of
  OBJ-3}}, pp. \bibinfo{pages}{287--301}.
\newblock \bibinfo{publisher}{Springer Berlin Heidelberg},
  \bibinfo{address}{Berlin, Heidelberg}, \doi{10.1007/3-540-19488-6_123}.

\bibitemdeclare{}{llvmsemantics}
\bibitem{llvmsemantics}
\bibinfo{author}{Liyi \surnamestart Li\surnameend} \& \bibinfo{author}{Elsa
  \surnamestart Gunter\surnameend} (\bibinfo{year}{2016}):
  \emph{\bibinfo{title}{LLVM Semantics}}.
\newblock \urlprefix\url{https://github.com/kframework/llvm-semantics}.

\bibitemdeclare{inbook}{martioliet00rewriting}
\bibitem{martioliet00rewriting}
\bibinfo{author}{Narciso \surnamestart Mart{\'i}-Oliet\surnameend} \&
  \bibinfo{author}{Jos{\'e} \surnamestart Meseguer\surnameend}
  (\bibinfo{year}{2002}): \emph{\bibinfo{title}{Rewriting Logic as a Logical
  and Semantic Framework}}, pp. \bibinfo{pages}{1--87}.
\newblock \bibinfo{publisher}{Springer Netherlands},
  \bibinfo{address}{Dordrecht}, \doi{10.1007/978-94-017-0464-9_1}.

\bibitemdeclare{article}{MARTIOLIET2002121}
\bibitem{MARTIOLIET2002121}
\bibinfo{author}{Narciso \surnamestart Mart{\'i}-Oliet\surnameend} \&
  \bibinfo{author}{Jos{\'e} \surnamestart Meseguer\surnameend}
  (\bibinfo{year}{2002}): \emph{\bibinfo{title}{Rewriting logic: roadmap and
  bibliography}}.
\newblock {\sl \bibinfo{journal}{Theoretical Computer Science}}
  \bibinfo{volume}{285}(\bibinfo{number}{2}), pp. \bibinfo{pages}{121 -- 154},
  \doi{10.1016/S0304-3975(01)00357-7}.
 \newblock
   \urlprefix\url{http://www.sciencedirect.com/science/article/pii/S0304397501003577}.
\newblock \bibinfo{note}{Rewriting Logic and its Applications}.

\bibitemdeclare{inproceedings}{jose111}
\bibitem{jose111}
\bibinfo{author}{Jos{\'e} \surnamestart Meseguer\surnameend}
  (\bibinfo{year}{1999}): \emph{\bibinfo{title}{Research Directions in
  Rewriting Logic}}.
\newblock In \bibinfo{editor}{Ulrich \surnamestart Berger\surnameend} \&
  \bibinfo{editor}{Helmut \surnamestart Schwichtenberg\surnameend}, editors:
  {\sl \bibinfo{booktitle}{Computational Logic}}, \bibinfo{publisher}{Springer
  Berlin Heidelberg}, \bibinfo{address}{Berlin, Heidelberg}, pp.
  \bibinfo{pages}{347--398}, \doi{10.1007/978-3-642-58622-4_10}.

\bibitemdeclare{article}{Mspecverrwl2003}
\bibitem{Mspecverrwl2003}
\bibinfo{author}{Jos{\'e} \surnamestart Meseguer\surnameend}
  (\bibinfo{year}{2003}): \emph{\bibinfo{title}{Software specification and
  verification in rewriting logic}}.
\newblock {\sl \bibinfo{journal}{NATO SCIENCE SERIES SUB SERIES III COMPUTER
  AND SYSTEMS SCIENCES}} \bibinfo{volume}{191}, pp. \bibinfo{pages}{133--194}.

\bibitemdeclare{article}{Meseguer:1989:OU:75739.75743}
\bibitem{Meseguer:1989:OU:75739.75743}
\bibinfo{author}{Jos{\'e} \surnamestart Meseguer\surnameend},
  \bibinfo{author}{Joseph~A. \surnamestart Goguen\surnameend} \&
  \bibinfo{author}{Gert \surnamestart Smolka\surnameend}
  (\bibinfo{year}{1989}): \emph{\bibinfo{title}{Order-sorted Unification}}.
\newblock {\sl \bibinfo{journal}{J. Symb. Comput.}}
  \bibinfo{volume}{8}(\bibinfo{number}{4}), pp. \bibinfo{pages}{383--413},
  \doi{10.1016/S0747-7171(89)80036-7}.

\bibitemdeclare{article}{Meseguer2017}
\bibitem{Meseguer2017}
\bibinfo{author}{Jos{\'e} \surnamestart Meseguer\surnameend} \&
  \bibinfo{author}{Stephen \surnamestart Skeirik\surnameend}
  (\bibinfo{year}{2017}): \emph{\bibinfo{title}{Equational formulas and pattern
  operations in initial order-sorted algebras}}.
\newblock {\sl \bibinfo{journal}{Formal Aspects of Computing}}
  \bibinfo{volume}{29}(\bibinfo{number}{3}), pp. \bibinfo{pages}{423--452},
  \doi{10.1007/s00165-017-0415-5}.

\bibitemdeclare{book}{Milner:1997:DSM:549659}
\bibitem{Milner:1997:DSM:549659}
\bibinfo{author}{Robin \surnamestart Milner\surnameend}, \bibinfo{author}{Mads
  \surnamestart Tofte\surnameend} \& \bibinfo{author}{David \surnamestart
  Macqueen\surnameend} (\bibinfo{year}{1997}): \emph{\bibinfo{title}{The
  Definition of Standard ML}}.
\newblock \bibinfo{publisher}{MIT Press}, \bibinfo{address}{Cambridge, MA,
  USA}.

\bibitemdeclare{inproceedings}{park-stefanescu-rosu-2015-pldi}
\bibitem{park-stefanescu-rosu-2015-pldi}
\bibinfo{author}{Daejun \surnamestart Park\surnameend}, \bibinfo{author}{Andrei
  \surnamestart \c{S}tef\u{a}nescu\surnameend} \& \bibinfo{author}{Grigore
  \surnamestart Ro\c{s}u\surnameend} (\bibinfo{year}{2015}):
  \emph{\bibinfo{title}{{KJS}: A Complete Formal Semantics of {JavaScript}}}.
\newblock In: {\sl \bibinfo{booktitle}{Proceedings of the 36th ACM SIGPLAN
  Conference on Programming Language Design and Implementation (PLDI'15)}},
  \bibinfo{publisher}{ACM}, pp. \bibinfo{pages}{346--356},
  \doi{10.1145/2737924.2737991}.

\bibitemdeclare{incollection}{isabelle}
\bibitem{isabelle}
\bibinfo{author}{Lawrence~C. \surnamestart Paulson\surnameend}
  (\bibinfo{year}{1990}): \emph{\bibinfo{title}{{Isabelle}: The Next 700
  Theorem Provers}}.
\newblock In \bibinfo{editor}{P.~\surnamestart Odifreddi\surnameend}, editor:
  {\sl \bibinfo{booktitle}{Logic and Computer Science}},
  \bibinfo{publisher}{Academic Press}, pp. \bibinfo{pages}{361--386}.

\bibitemdeclare{article}{rosu-serbanuta-2010-jlap}
\bibitem{rosu-serbanuta-2010-jlap}
\bibinfo{author}{Grigore \surnamestart Ro{\c s}u\surnameend} \&
  \bibinfo{author}{Traian~Florin \surnamestart {\c S}erb{\u a}nu{\c t}{\u
  a}\surnameend} (\bibinfo{year}{2010}): \emph{\bibinfo{title}{An Overview of
  the {K} Semantic Framework}}.
\newblock {\sl \bibinfo{journal}{Journal of Logic and Algebraic Programming}}
  \bibinfo{volume}{79}(\bibinfo{number}{6}), pp. \bibinfo{pages}{397--434},
  \doi{10.1016/j.jlap.2010.03.012}.

\bibitemdeclare{inproceedings}{Stell:2002:FOA:646061.676166}
\bibitem{Stell:2002:FOA:646061.676166}
\bibinfo{author}{John~G. \surnamestart Stell\surnameend}
  (\bibinfo{year}{2002}): \emph{\bibinfo{title}{A Framework for Order-Sorted
  Algebra}}.
\newblock In: {\sl \bibinfo{booktitle}{Proceedings of the 9th International
  Conference on Algebraic Methodology and Software Technology}},
  \bibinfo{series}{AMAST '02}, \bibinfo{publisher}{Springer-Verlag},
  \bibinfo{address}{London, UK, UK}, pp. \bibinfo{pages}{396--410},
  \doi{10.1007/3-540-45719-4_27}.

\bibitemdeclare{inproceedings}{fdr}
\bibitem{fdr}
\bibinfo{author}{Alexandre Boulgakov A.W.~Roscoe \surnamestart Thomas
  Gibson-Robinson\surnameend, Philip~Armstrong} (\bibinfo{year}{2014}):
  \emph{\bibinfo{title}{{FDR3 --- A Modern Refinement Checker for CSP}}}.
\newblock In \bibinfo{editor}{Erika \surnamestart \'{A}brah\'{a}m\surnameend}
  \& \bibinfo{editor}{Klaus \surnamestart Havelund\surnameend}, editors: {\sl
  \bibinfo{booktitle}{Tools and Algorithms for the Construction and Analysis of
  Systems}}, {\sl \bibinfo{series}{Lecture Notes in Computer Science}}
  \bibinfo{volume}{8413}, pp. \bibinfo{pages}{187--201},
   \doi{10.1007/978-3-642-54862-8_13}.

\bibitemdeclare{article}{wang1952}
\bibitem{wang1952}
\bibinfo{author}{Hao \surnamestart Wang\surnameend} (\bibinfo{year}{1952}):
  \emph{\bibinfo{title}{Logic of many-sorted theories}}.
\newblock {\sl \bibinfo{journal}{Journal of Symbolic Logic}}
  \bibinfo{volume}{17}(\bibinfo{number}{2}), pp. \bibinfo{pages}{105--116},
  \doi{10.2307/2266241}.

\end{thebibliography}

\end{document}